\newcommand\datver[1]{\def\datverp
 {\par\boxed{\boxed{\text{Version: #1; Run: \today}}}}}
\newcommand{\ie}{{\em i.e., }}
\newcommand{\diam}{\operatorname{diam}}
\newcommand\vt[1]{ \mathbf{#1} }
\newcommand{\maC}{\mathcal C}
\newcommand{\maD}{\mathcal D}
\newcommand{\maK}{\mathcal K}
\newcommand{\maL}{\mathcal L}
\newcommand{\maO}{\mathcal O}
\newcommand{\maR}{\mathcal R}
\newcommand{\maS}{\mathcal S}
\newcommand{\maT}{\mathcal T}
\newcommand{\maV}{\mathcal V}
\newcommand{\CC}{\mathbb C}
\newcommand{\NN}{\mathbb N}
\newcommand{\PP}{\mathbb P}
\newcommand{\RR}{\mathbb R}
\newcommand{\TT}{\mathbb T}
\newcommand{\ZZ}{\mathbb Z}
\newcommand\Hk{H_{\vt k}}
\newcommand\Hkn{H_{\vt k, n}}
\newcommand\TmS{\TT \smallsetminus \maS}
\newtheorem{theorem}{Theorem}[section]
\newtheorem{lemma}[theorem]{Lemma}
\theoremstyle{definition}
\theoremstyle{remark}
\newtheorem{remark}[theorem]{Remark}
\newtheorem{algorithm}[theorem]{Algorithm}
\author[E. Hunsicker]{Eugenie Hunsicker} \address{Eugenie Hunsicker,
  Department of Mathematical Sciences, Loughborough University,
  Loughborough, Leicestershire, LE11 3TU, UK }
\email{E.Hunsicker@lboro.ac.uk}
\author[H. Li]{Hengguang Li} \address{Hengguang Li, Department of
  Mathematics, Wayne State University, Detroit, MI 48202, USA}
\email{hli@math.wayne.edu}
\author[V. Nistor]{Victor Nistor} \address{V. Nistor, Pennsylvania
  State University, Math. Dept., University Park, PA 16802, USA, and
  Inst. Math. Romanian Acad.  PO BOX 1-764, 014700 Bucharest Romania}
\email{nistor@math.psu.edu}
\author[V. Uski]{Ville Uski} \address{Ville Uski, Department of
  Mathematical Sciences, Loughborough University, Loughborough,
  Leicestershire, LE11 3TU, UK } \email{V.Uski@lboro.ac.uk}
\date\today \thanks{V.N. was partially supported by the NSF Grants
   OCI-0749202 and DMS-1016556. Manuscripts available
  from {\bf
    http:{\scriptsize//}www.math.psu.edu{\scriptsize/}nistor{\scriptsize/}}.
  H.L. was partially supported by the NSF Grant DMS-1115714.  V.U. was
  supported by and EH was supported in part by Leverhulme Trust
  grant J11695.}
\begin{document}

\title[Schr\"odinger operators]{Analysis of Schr\"odinger operators
  with inverse square potentials {II}: FEM and approximation of
  eigenfunctions in the periodic case}

\date{\today}

%\subjclass{Primary: ???, Secondary: ????}

\begin{abstract}
Let $V$ be a {\em periodic} potential on $\RR^3$ that is smooth
everywhere except at a discrete set $\maS$ of points, where it has
singularities of the form $Z/\rho^2$, with $\rho(x) = |x - p|$ for $x$
close to $p$ and $Z$ is continuous, $Z(p) > -1/4$ for $p \in \maS$. We
also assume that $\rho$ and $Z$ are smooth outside $\maS$ and $Z$ is
smooth in polar coordinates around each singular point. Let us denote
by $\Lambda$ the periodicity lattice and set $\TT := \RR^3/
\Lambda$. In the first paper of this series \cite{HLNU1}, we
obtained regularity results in weighted Sobolev space for the
eigenfunctions of the Schr\"odinger-type operator $H = -\Delta + V$
acting on $L^2(\TT)$, as well as for the induced $\vt k$--Hamiltonians
$\Hk$ obtained by resticting the action of $H$ to Bloch waves. In this
paper we present two related applications: one to the Finite Element
approximation of the solution of $(L+\Hk) v = f$ and one to the
numerical approximation of the eigenvalues, $\lambda$, and
eigenfunctions, $u$, of $\Hk$. We give optimal, higher order
convergence results for approximation spaces defined piecewise
polynomials. Our numerical tests are in good agreement with the
theoretical results.
\end{abstract}

\maketitle

\tableofcontents

\section{Introduction and statement of main results}

In this paper, which is the second part of work begun in \cite{HLNU1},
we present applications of the theoretical regularity results in the
first part of this paper to Finite Element Method approximation
schemes.  The first application is to approximation of eigenvalues,
$\lambda$, and eigenfunctions, $u$, of the Bloch operator, $\Hk$,
associated to a periodic Hamiltonian operator with inverse square
potential at isolated points.  For example, one of our main results,
Theorem \ref{athm1.eig} yields optimal orders of convergence for the
Finite Element approximations of the eigenvalues of $\Hk$ using graded
meshes.  These rates are higher than those that can be obtained using
standard meshes.  The second application is to the Finite Element
Method, again using graded meshes, applied to equations of the form
$(L + \Hk)v = f$.  The final section of this paper presents numerical
tests showing good agreement with our theoretical results for this
second problem.

Hamiltonian operators with inverse square potentials arise in a
variety of interesting contexts.  The standard example of a
Schr\"odinger operator with $c/\rho$ potential is a special case of the
inverse square potentials we consider, where
the function $\rho^2V$ vanishes to order 1 at the singularity, 
and the results of this work apply to such
operators.  But in addition, Hamiltonians with true inverse square
potentials arise in relativistic quantum mechanics from the square of
the Dirac operator coupled with an interaction potential, and they
arise in the interaction of a polar molecule with an electron.  See
\cite{MorozSchmidt, Sprung} for further applications of inverse square
potentials to physics. See also \cite{Bespalov, MR2421883, Li09, LN09, Westphal}
for related results on operators with singular coefficients.  Thus it
is interesting in several areas of physics to understand how to
approximate solutions to equations involving such operators.

Before we can state our approximation results, we must fix some
notation and state the assumptions we make about our Hamiltonian
operators.  Consider a Hamiltonian operator $H := -\Delta + V$ that is
periodic on $\mathbb{R}^3$ with triclinic periodicity lattice
$\Lambda$.  Its fundamental domain is a parallelopiped whose faces can
be identified under the symmetries of $H$ to form the torus
$\mathbb{T} = \mathbb{R}^3/\Lambda$, which is how we will denote this
fundamental domain in the remainder of this paper.  Let $\rho(x)$ be a
continuous function on $\mathbb{T}$ that is given by $\rho(x) =|x-p|$
for $x$ close to $p$, is smooth except at the points of $\maS$, and
may be assumed to be equal to one outside a neighbourhood of $\maS$.

We need two assumptions about the potentials $V$ that we will consider
in this paper.  First, we assume that $V$ is smooth except at a set of
points $\maS \subset \mathbb{T}$, near which it has singularities of
the form $Z/\rho^2$, where $Z$ is continuous on $\mathbb{T}$ and
smooth in polar coordinates around $p$.  We denote this as follows.
\begin{equation}\label{eq.def.Z}
  \text{\bf Assumption 1}: \qquad\ Z := \rho^2 V \in \maC(\TT) \cap
  \maC^\infty(\overline{\TmS}).
\end{equation}
Assumption 1, more precisely the continuity of $Z$ at $\maS$, allows
us to formulate our second assumption. Namely,
\begin{equation}\label{eq.Z1/4}
\text{\bf Assumption 2}: \eta:= \min_{p \in \maS} \sqrt{1/4 + Z(p)} > 0.
\end{equation}
In particular, we assume that for all $p \in \maS$, $Z(p) > -1/4$.
These assumptions are sharp in the sense that the analysis yields
fundamentally different results if either one fails. In particular,
the value $\eta = -1/4$ corresponds to the critical coupling for an
isolated inverse square potential in $\mathbb{R}^3$ where the system
undergoes a transition between the conformal and non-conformal regimes
\cite{MorozSchmidt}.  If the first assumption fails, then the
available analytic techniques are much weaker, see for instance
\cite{Felli1, Felli2}. In either case, the approximation theorems in
this paper fail if either assumption is violated.  More details of
this are included in the first part, \cite{HLNU1}, and a study of the
analysis when these assumptions are relaxed will be examined in a
forthcoming paper.

We are interested in understanding the spectrum and generalised
eigenfunctions of the operator $H$.  As usual, we do this by studying
Bloch waves.  Recall that if $\vt k$ is an element of the first
Brillion zone of $\Lambda$, that is, is an element of the fundamental
domain of the dual lattice of $\Lambda$, then a Bloch wave with wave
vector $\vt k$ is a function in $L^2_{loc}(\mathbb{R}^3)$ that
satisfies the semi-periodicity condition
\begin{equation}\label{eq.Block}
  \psi_{\vt k}(x+X) = e^{i{\vt k}\cdot X} \psi_{\vt k}(x) \qquad
  \forall X \in \Lambda.
\end{equation}
It is well known that such a Bloch wave can be written as
\begin{equation}\label{eq.Block2}
  \psi_{\vt k}(x) = e^{i{\vt k}\cdot x} u_{\vt k} (x)
\end{equation}
for a function $u_{\vt k}$ that is truly periodic with respect to
$\Lambda$ and thus can be considered as living on the three-torus
$\TT$.  We define the $\vt k$--Hamiltonian $\Hk$ on $L^2(\TT)$ by
\begin{equation}\label{eq:hamk}
  \Hk := - \sum_{j=1}^3 (\partial_j + i\rm{k}_j)^2 + V.
\end{equation}
Then we have further that if a Bloch wave $\psi_{\vt k}$ is a
generalized eigenfunction of $H$ with generalized eigenvalue
$\lambda$, then the function $u_{\vt k} := e^{ -i{\vt k}\cdot x}
\psi_{\vt k}(x)$ is a standard $L^2$-eigenfunction of $\Hk$ with
eigenvalue $\lambda$.  Let $\lambda_{j}$, $j \ge 1$, be the
eigenvalues of $\Hk$, arranged in increasing order, $\ldots \le
\lambda_{j} \le \lambda_{j+1} \le \ldots$, and repeated according to
their multiplicities. That is, if $E(\lambda)$ denotes the eigenspace
of $\Hk$ corresponding to $\lambda$, then $\lambda$ is repeated $\dim
(E(\lambda))$ times.
  
As usual, for our finite element approximation results, we consider a
sequence $S_n$ of finite dimensional subspaces of the domain of $\Hk$
and let $R_n$ denote the Riesz projection onto $S_{n}$, that is, the
projection in the bilinear form $((L+\Hk) y,w)_{L^2(\TmS)}$, (for a
suitable $C \ge 0$), associated to $\Hk$. Let $\Hkn := R_n \Hk R_n$ be
the associated finite element approximation of $\Hk$, acting on
$S_n$. Denote by $\lambda_{j,n}$ the eigenvalues of the approximation
$\Hkn$, again arranged in increasing order, $\ldots \le \lambda_{j, n}
\le \lambda_{j+1, n} \le \ldots$, and repeated according to their
multiplicities and let $u_{j, n} \in S_n$ be a choice of corresponding
eigenfunctions (linearly independent). The spaces $S_n$ we use for our
theorems are defined in terms of a sequence of graded tetrahedral
meshes $\maT_n:= k^n(\maT_0)$ on $\TT$ (sometimes called
triangulations), given by sequential refinements, associated to a
scaling parameter $k$, of an original tetrahedral mesh $\maT_0$.  We
describe the meshing refinement procedure in detail in Section
\ref{sec3}.  We will take $S_n = S(\maT_n, m),$ the finite element
spaces associated to these meshes (\ie using continuous, piecewise
polynomials of degree $m$).

Our first theorem, which is a theoretical result for the finite
element method approximation of eigenvalues and eigenfunctions of
$\Hk$ using tetrahedralisations with graded meshes, is as follows.

\begin{theorem}\label{athm1.eig}
Let $\lambda_j$ be an eigenvalue of $\Hk$ and fix $0<a<\eta$, $a \leq
m$. Let $\lambda_{j,n}$ be the finite element approximations of
$\lambda_j$ associated to the nested sequence $\maT_n$ of meshes on
$\TT$ defined by the scaling parameter $k = 2^{-m/a}$ {and
  piecewise polynomials of degree $m$. Also, let $u_{j, n}$ be an
  eigenbasis corresponding to $\lambda_{j,n}$.} Then there exists a
constant $c(\lambda_j,a)$ independent of $n$ such that the following
inequalities hold for a suitable eigenvector $u_j \in E(\lambda_j)$:
\begin{equation*}
  |\lambda_j -  \lambda_{j, n}| \le c(\lambda_j,a) \dim(S_n)^{-2m/3},
\end{equation*}
\begin{equation*}
	\|u_{j} - u_{j, n} \|_{\maK^1_1(\TmS)} \leq c(\lambda_j,a)
        \dim(S_n)^{-m/3}, 
\end{equation*}
where the space $\maK^1_1(\TmS)$ is a weighted Sobolev space defined
below in Equation \ref{eq.def.ws}.
\end{theorem}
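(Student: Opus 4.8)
The plan is to reduce the statement to two inputs and then turn the crank on the classical finite element spectral approximation theory of Babu\v{s}ka--Osborn. The two inputs are: (i) the weighted Sobolev regularity of the eigenfunctions of $\Hk$ proved in \cite{HLNU1}, namely that $E(\lambda_j) \subset \maK^{m+1}_{a+1}(\TmS)$ for any $0 < a < \eta$; and (ii) the optimal interpolation estimate on the graded meshes $\maT_n$, which will be established in Section \ref{sec3}. First I would fix the shift constant $C \ge 0$ (denoted $L$ in the statement) so that the bilinear form $B(u,v) := ((L+\Hk)u,v)_{L^2(\TmS)}$ is bounded and coercive on the form domain of $\Hk$, which by \cite{HLNU1} is $\maK^1_1(\TmS)$ with its intrinsic norm. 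Coercivity makes the Riesz projection $R_n$ onto $S_n = S(\maT_n,m)$ well defined and gives the C\'ea-type quasi-optimality $\|v - R_n v\|_{\maK^1_1} \le C \inf_{v_n \in S_n}\|v - v_n\|_{\maK^1_1}$. Setting $T := (L+\Hk)^{-1}$ (compact, positive, self-adjoint) and $T_n := R_n T$, the eigenpairs of $\Hk$ and $\Hkn$ correspond to those of $T$ and $T_n$ via $\lambda \mapsto (C+\lambda)^{-1}$.

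Next I would establish the governing approximation quantity. Since $E(\lambda_j)$ is finite dimensional and contained in $\maK^{m+1}_{a+1}(\TmS)$, the $\maK^{m+1}_{a+1}$-norm is bounded on the $\maK^1_1$-unit ball of $E(\lambda_j)$. Invoking the graded-mesh interpolation estimate of Section \ref{sec3}, with grading ratio $k = 2^{-m/a}$ and $0 < a < \eta$, $a \le m$, one has for every $v \in \maK^{m+1}_{a+1}(\TmS)$ a discrete $v_n \in S_n$ with
\begin{equation*}
  \|v - v_n\|_{\maK^1_1(\TmS)} \;\le\; C\, 2^{-mn}\, \|v\|_{\maK^{m+1}_{a+1}(\TmS)} \;\le\; C\, \dim(S_n)^{-m/3}\, \|v\|_{\maK^{m+1}_{a+1}(\TmS)},
\end{equation*}
where the last inequality uses $\dim(S_n) \asymp 2^{3n}$ (the grading shrinks the elements near $\maS$ but changes the global element count only by a bounded factor). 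Combining this with the quasi-optimality of $R_n$ and the uniform regularity bound on $E(\lambda_j)$ shows that the consistency parameter
\begin{equation*}
  \epsilon_n \;:=\; \sup_{u \in E(\lambda_j),\ \|u\|_{\maK^1_1}=1}\ \|(T - T_n)u\|_{\maK^1_1(\TmS)}
\end{equation*}
satisfies $\epsilon_n \le C\,\dim(S_n)^{-m/3}$, and in particular $T_n \to T$ in operator norm restricted to $E(\lambda_j)$.

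Finally I would feed $\epsilon_n$ into the standard spectral perturbation estimates. For $n$ large enough the eigenvalues of $T_n$ in a fixed neighbourhood of $(C+\lambda_j)^{-1}$ are in one-to-one correspondence with the eigenvalues of $T_n$ counted by $\dim E(\lambda_j)$, and the Babu\v{s}ka--Osborn bounds give, for a suitable $u_j \in E(\lambda_j)$,
\begin{equation*}
  \|u_j - u_{j,n}\|_{\maK^1_1(\TmS)} \le C\,\epsilon_n, \qquad |\lambda_j - \lambda_{j,n}| \le C\,\epsilon_n^2,
\end{equation*}
the quadratic gain for the eigenvalue coming from the symmetry of $B$ and the stationarity of the Rayleigh quotient at eigenfunctions. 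Substituting $\epsilon_n \le C\,\dim(S_n)^{-m/3}$ yields precisely the two asserted inequalities, with $c(\lambda_j,a)$ absorbing the constants (which depend on $\lambda_j$ through the spectral gap and on $a$ through the regularity and interpolation constants). The clause ``a suitable eigenvector $u_j$'' is unavoidable because, when $\lambda_j$ is a multiple eigenvalue, the discrete eigenfunctions converge only to some, not a prescribed, element of $E(\lambda_j)$.

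The one genuinely delicate step is the graded-mesh interpolation estimate, and in particular checking that the grading exponent $k = 2^{-m/a}$ is exactly what is needed: one decomposes a neighbourhood of each singular point into dyadic annuli, rescales each to unit size, applies the Bramble--Hilbert lemma with degree-$m$ polynomials on the rescaled piece, and sums the resulting local errors — the geometric series over the dyadic scales converges at the optimal rate $O(2^{-mn})$ precisely when $a \le m$ and $a < \eta$ and the grading is $2^{-m/a}$. Once that estimate and the coercive shift are in place, the passage through $T$, $T_n$, and Babu\v{s}ka--Osborn is routine.
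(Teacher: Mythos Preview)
Your proposal is correct and follows essentially the same route as the paper: reduce to the Babu\v{s}ka--Osborn spectral approximation estimates (the paper packages these as a black-box Theorem giving the $\sup_{u \in E_1(\lambda)}\inf_{\chi \in S_n}\|u-\chi\|_{\maK^1_1}$ bounds, rather than going through $T$ and $T_n$ explicitly, but this is only a cosmetic difference), then control that quantity by combining the regularity/bootstrap $\|u\|_{\maK^{m+1}_{a+1}} \le C_{m,\lambda}\|u\|_{\maK^1_1}$ for eigenfunctions with the graded-mesh interpolation estimate of Section~\ref{sec3}. Your identification of the interpolation estimate as the one genuinely delicate step, and of the dyadic-annulus rescaling argument with grading $k=2^{-m/a}$ as its mechanism, matches the paper's development exactly.
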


For our second theorem, we consider the finite element approximations
of the equation
\begin{equation}\label{eq.PDE.eq}
  (L + \Hk)v = f, \quad \mbox{ for } L > C_0,
\end{equation}
where $C_0$ is the constant from Theorem \ref{theorem1} below. We then
define the form $a(y, w) := ((L + \Hk) y, w)$ and let $v$ be the
solution of Equation \eqref{eq.PDE.eq} above. We then define the usual
Galerkin Finite Element approximation $v_n$ of $v$ as the unique $v_n
\in S_n := S(\maT_n, m)$ such that
\begin{equation}\label{eq.FEM.eq}
  a(v_n, w_n) := \big( (L + \Hk)v_n, w_n \big) = (f, w), \mbox{
    for all } w_n \in S_n.
\end{equation}

Then Theorem \ref{athm1.eig}together with the Lax-Milgram Lemma and
Cea's lemma imply that we have the following $h^m$ quasi-optimal rate
of convergence.

\begin{theorem}\label{athm1.fem}
The sequence $\maT_n := k^n(\maT_0)$ of meshes on $\PP$ defined using
the $k$-refinement, for $k =2^{-m/a}$, $0<a<\eta$, $a \le m$, and
piecewise polynomials of degree $m$, has the following property. The
sequence $v_{n} \in S_n := S(\maT_n, m)$ of Finite Element (Galerkin)
approximations of $v$ from Equation \eqref{eq.FEM.eq} satisfies
\begin{equation}\label{nnn1}
	\|v - v_{n} \|_{\maK^1_1(\TmS)} \leq C \dim(S_n)^{-m/3}
        \|f\|_{\maK^{m-1}_{a-1}(\TmS)},
\end{equation}
where $C$ is independent of $n$ and $f$. 
\end{theorem}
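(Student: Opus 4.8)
The plan is to derive Theorem \ref{athm1.fem} from Theorem \ref{athm1.eig} (more precisely, from the regularity theory underlying it, which is stated in \cite{HLNU1} and summarized in Theorem \ref{theorem1}) together with the two classical functional-analytic tools mentioned: the Lax--Milgram Lemma and C\'ea's Lemma. The starting point is that the bilinear form $a(y,w) = ((L+\Hk)y,w)_{L^2(\TmS)}$ is, for $L>C_0$, bounded and coercive on the weighted Sobolev space $\maK^1_1(\TmS)$; this is exactly what the constant $C_0$ from Theorem \ref{theorem1} buys us (a G\r{a}rding-type inequality promoted to coercivity by the shift $L$). Lax--Milgram then gives a unique solution $v \in \maK^1_1(\TmS)$ of \eqref{eq.PDE.eq} and a unique Galerkin solution $v_n \in S_n$ of \eqref{eq.FEM.eq}, and C\'ea's Lemma reduces the error to a best-approximation estimate: there is $C$ independent of $n$ with
\begin{equation*}
  \|v - v_n\|_{\maK^1_1(\TmS)} \le C \inf_{w_n \in S_n} \|v - w_n\|_{\maK^1_1(\TmS)}.
\end{equation*}

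Next I would invoke the elliptic regularity result of \cite{HLNU1}: since $f \in \maK^{m-1}_{a-1}(\TmS)$, the solution $v$ lies in the higher-order weighted space $\maK^{m+1}_{a+1}(\TmS)$ with $\|v\|_{\maK^{m+1}_{a+1}(\TmS)} \le C\|f\|_{\maK^{m-1}_{a-1}(\TmS)}$; this is the Babu\v{s}ka--Kondratiev-type shift theorem, valid precisely because $0<a<\eta$ (Assumption 2) keeps us below the critical exponent. The remaining ingredient is the approximation property of the graded finite element spaces $S_n = S(\maT_n,m)$: for the meshes $\maT_n = k^n(\maT_0)$ with grading parameter $k = 2^{-m/a}$ and piecewise polynomials of degree $m$, one has
\begin{equation*}
  \inf_{w_n \in S_n} \|u - w_n\|_{\maK^1_1(\TmS)} \le C\, 2^{-mn}\, \|u\|_{\maK^{m+1}_{a+1}(\TmS)}
\end{equation*}
for any $u \in \maK^{m+1}_{a+1}(\TmS)$. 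This is the heart of the graded-mesh method: on tetrahedra away from $\maS$ the mesh size is $O(2^{-n})$ and standard polynomial interpolation gives the $O(h^m)$ rate, while on the dyadic layers of tetrahedra approaching each singular point the geometric grading by $k$ compensates exactly for the loss of regularity measured by the weight, so the same global rate survives; this estimate is established in Section \ref{sec3} (and is the mesh-construction analogue of what is used to prove Theorem \ref{athm1.eig}). Finally I would combine the two displays with the dimension count $\dim(S_n) \asymp k^{-3n} \cdot 2^{3n}$... more simply, for a quasi-uniform count one has $\dim(S_n)^{1/3} \asymp 2^{n}$ up to constants depending only on $\maT_0$ and $m$, so $2^{-mn} \asymp \dim(S_n)^{-m/3}$, and chaining C\'ea, the regularity estimate, and the approximation estimate yields \eqref{nnn1} with $C$ independent of $n$ and $f$.

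The main obstacle is the interplay, hidden in the approximation estimate, between the grading parameter $k = 2^{-m/a}$ and the weighted norm: one must show that the chosen grading is \emph{exactly} strong enough that the interpolation error contributed by the tetrahedra in the innermost dyadic annulus around each $p \in \maS$ does not dominate, i.e. that the factor $k^{(a)\cdot(\text{refinement level})}$ from the weight $\rho^{a+1}$ beats the loss $k^{-(\text{level})\cdot m}$ coming from differentiating up to order $m$ near the singularity. This is a careful scaling argument on each mesh layer using a dilation that normalizes the annulus to unit size, after which a standard Bramble--Hilbert/Deny--Lions estimate applies; the bookkeeping of powers of $k$ across all $O(n)$ layers is where the precise value $k = 2^{-m/a}$ is forced, and it is also where Assumptions 1 and 2 enter, since they guarantee $v \in \maK^{m+1}_{a+1}$ with the control needed for the layer-wise estimate. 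Everything else—Lax--Milgram, C\'ea, and the dimension count—is routine once the coercivity constant $C_0$ and the weighted regularity shift from \cite{HLNU1} are in hand.
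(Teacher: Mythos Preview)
Your proposal is correct and follows essentially the same route as the paper: the paper states that Theorem \ref{athm1.fem} follows from Theorem \ref{athm1.gen} (the graded-mesh approximation estimate you describe), the Lax--Milgram Lemma, and C\'ea's Lemma, combined with the regularity shift of Theorem \ref{theorem1} to pass from $\|f\|_{\maK^{m-1}_{a-1}}$ to $\|v\|_{\maK^{m+1}_{a+1}}$. Your identification of the scaling argument on dyadic layers as the crux of the approximation estimate, and of $\dim(S_n)\asymp 2^{3n}$ for the final conversion, matches the paper's Section \ref{sec3} exactly.
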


These theorems are interesting because it is known that the
convergence rate of a standard finite element method (\ie based on
quasi-uniform meshes) is limited. However, under the assumptions on
our potentials, if we use graded meshes instead, we can obtain an
approximation rate as fast as we like by using polynomials of
sufficiently high degree in the elements. This is due to the fact that
although regularity of the associated Bloch waves is limited in terms
of standard Sobolev spaces on $\TT$, it is arbitrarily good with
respect to weighted Sobolev spaces.  We will recall the definition of
these spaces and the relevant regularity results from \cite{HLNU1}
along with additional background in Section \ref{sec:background}.

The remainder of the paper is organised as follows. In Section
\ref{sec3}, we first describe the $k$-refinement algorithm for the
three dimensional tetrahedral meshes, which results in a sequence of
meshes $\maT_n$.  We then prove a general interpolation approximation
result for the sequence of finite element spaces associated to this
sequence of meshes. In Section \ref{sec.FEM} we use this general
approximation result to prove our main approximation results This
section includes in particular the proofs of Theorem \ref{athm1.eig}
and Theorem \ref{athm1.fem}, as well as an additional result about the
condition number of the stiffness matrix associated to the finite
element spaces, $S_n$.  In the last section, Section \ref{sec.tests},
we discuss results of numerical tests of the method for solving
equations of the form $(L+\Hk)v=f$ and compare them to the theoretical
results.

\subsection*{Acknowledgements} 
We would like to thank Bernd Ammann, Douglas Arnold, and Catarina
Carvalho for useful discussions. We also thank the Leverhulme Trust
whose funding supported the fourth author during this project. This
project was started while Hunsicker and Nistor were visiting the Max
Planck Institute for Mathematics in Bonn, Germany, and we are greatful
for its support.

\section{Background results}\label{sec:background}

In this section we recall some definitions and results from
\cite{HLNU1}, as well as the classical approximation result for
Lagrange interpolants (see \cite{BabuAziz, BrennerScott, Ciarlet91,
  SchwabBook}), that will be used in the proofs of the approximation
theorems above.  First, these results are given in terms of weighted
Sobolev spaces which are defined as follows:
\begin{equation}\label{eq.def.ws}
  \maK^m_a(\TmS ) := \{v : \TT \smallsetminus
  \maS \to \CC, \ \rho^{|\beta|-a} \partial^\beta v \in L^2(\TT),
  \ \forall\ |\beta| \leq m\}.
\end{equation}
These spaces have been considered in many other papers, most notably
in Kondratiev's groundbreaking paper \cite{kondratiev67}. 

The first result that we recall guarantees the existence of
solutions of equations of the form $(L + \Hk)v=f$ for $L$ greater than
some constant $C_0$, and identifies the natural domain of $\Hk$. Let
us fix smooth functions $\chi_p$ supported near points of $\maS$ such
that the functions $\chi_p$ have disjoint supports and $\chi_p = 1$ in
a small neighbourhood of $p \in \maS$. Then Theorem 1.1 and
Proposition 3.6 from \cite{HLNU1} combine to give right away the
following result.

\begin{theorem}\label{theorem1}
Let $V$ be a potential satisfying both Assumptions 1 and 2. Then there
exists $C_0 > 0$ such that $L + \Hk : \maK_{a+1}^{m+1}(\TmS) \to
\maK_{a-1}^{m-1}(\TmS)$ is an isomorphism for all $m \in \ZZ_{\ge 0}$,
all $|a| < \eta$, and all $L> C_0$. Moreover, for any $u \in
\maK_{a+1}^{m+1}(\TmS)$ satisfying $(L + \Hk)v = f \in H^{m-1}(\TmS)$,
we can find constants $a_p \in \RR$ such that
\begin{equation*}
    u_{reg} := u - \sum_{p \in \maS} \chi_p \rho^{\sqrt{1/4 + Z(p)} -
      1/2} \in \maK^{m+1}_{2}(\TmS ).
\end{equation*}
\end{theorem}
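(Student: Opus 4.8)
The plan is to derive this statement by splicing together two results of the companion paper \cite{HLNU1}: the mapping property of $L+\Hk$ on the weighted Sobolev scale (its Theorem~1.1) and the description of solutions near the singular set (its Proposition~3.6). No genuinely new analysis is needed here; the task is to check that the indices line up and that the two conclusions combine as asserted.

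\textbf{Step 1 (the isomorphism).} By the Fredholm theory of \cite{HLNU1}, $L+\Hk \colon \maK^{m+1}_{a+1}(\TmS)\to\maK^{m-1}_{a-1}(\TmS)$ is bounded for every $m\ge 0$ and every $a$, and it is Fredholm of index zero provided no indicial root of the model operators $-\Delta+Z(p)\rho^{-2}$ (the leading part of $\Hk$ near $p$, the $\vt k$-dependent first order terms being lower order) lies on the critical weight line. Since every such root has the form $-\tfrac12\pm\sqrt{\tfrac14+\ell(\ell+1)+Z(p)}$ with $\ell\in\ZZ_{\ge0}$, and $\sqrt{\tfrac14+Z(p)}\ge\eta$, this holds for all $|a|<\eta$. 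To promote Fredholmness to invertibility I would take $L>C_0$ for a suitable $C_0>0$: for $L$ large the form $\bigl((L+\Hk)v,v\bigr)_{L^2}$ is coercive on $\maK^1_1(\TmS)$ — this is exactly where Assumption~2, i.e.\ $Z(p)>-\tfrac14$, enters, through the sharp Hardy inequality $\tfrac14\int\rho^{-2}|v|^2\le\int|\nabla v|^2$, which lets $\|\nabla v\|^2+L\|v\|^2$ absorb the (possibly negative) contribution of $Z/\rho^2$ — so the kernel is trivial, and being of index zero the operator is onto. That is Theorem~1.1 of \cite{HLNU1} for $m=0$, with the general $m$ following from elliptic regularity inside the weighted scale.

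\textbf{Step 2 (the asymptotic decomposition).} Let $u\in\maK^{m+1}_{a+1}(\TmS)$ solve $(L+\Hk)u=f$ with $f\in H^{m-1}(\TmS)$. If $a\ge 1$ then $a+1\ge2$, so $\maK^{m+1}_{a+1}\subseteq\maK^{m+1}_2$ and the conclusion already holds with all $a_p=0$; so assume $a<1$. Since $\rho\le 1$ on $\TT$, the weights $\rho^{|\beta|-(a-1)}$ are then bounded for every $|\beta|\ge0$, whence $H^{m-1}(\TmS)\hookrightarrow\maK^{m-1}_{a-1}(\TmS)$ and $f$ is an admissible right-hand side for the isomorphism of Step~1. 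Proposition~3.6 of \cite{HLNU1} now applies and says that such a $u$, once one subtracts near each $p\in\maS$ the partial asymptotic expansion built from those indicial roots of the model operator lying between the weights $a+1$ and $2$, belongs to $\maK^{m+1}_2(\TmS)$. It remains to note that no indicial root other than possibly $s_p:=\sqrt{\tfrac14+Z(p)}-\tfrac12$ (the spherically constant mode, $\ell=0$) can lie in that window: the second $\ell=0$ root $-\tfrac12-\sqrt{\tfrac14+Z(p)}$ is $<-\tfrac12$ and plays no role, while for $\ell\ge 1$ the root $-\tfrac12+\sqrt{\ell(\ell+1)+\tfrac14+Z(p)}$ exceeds $\tfrac12$ because $Z(p)>-\tfrac14$. (Using that $\rho^t\in\maK^\bullet_s(\TmS)$ in dimension $3$ exactly when $t>s-\tfrac32$, the weights $a+1$ and $2$ correspond to the thresholds $a-\tfrac12$ and $\tfrac12$, and $s_p$ is the only root that can sit in $(a-\tfrac12,\tfrac12\,]$.) Hence the expansion reduces to a single monomial $a_p\chi_p\rho^{s_p}$ at each $p$, and $u-\sum_{p\in\maS}a_p\chi_p\rho^{\sqrt{1/4+Z(p)}-1/2}\in\maK^{m+1}_2(\TmS)$, with these $a_p$ the promised constants; the displayed formula in the statement should read $u-\sum_p a_p\chi_p\rho^{\sqrt{1/4+Z(p)}-1/2}$.

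\textbf{The main obstacle.} The only point that calls for real care is the last one: verifying that in passing from weight $a+1$ to weight $2$ exactly one indicial root of each model operator is crossed, so that a single power $\rho^{s_p}$ per singular point suffices and no logarithmic terms occur. This is precisely what the strict inequality $Z(p)>-\tfrac14$ (Assumption~2) guarantees, and once it is in hand the rest is a direct transcription of the two cited results from \cite{HLNU1}.
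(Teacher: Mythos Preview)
Your proposal is correct and takes essentially the same approach as the paper: the paper simply states that Theorem~1.1 and Proposition~3.6 of \cite{HLNU1} ``combine to give right away'' this result, with no further argument. Your write-up supplies the details the paper omits---the coercivity for large $L$, the indicial-root count showing that only the $\ell=0$ mode $\rho^{\sqrt{1/4+Z(p)}-1/2}$ can lie in the relevant weight window, and the observation that the displayed formula is missing the constants $a_p$---but the underlying strategy is identical.
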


We obtain, in particular, that $\Hk$ has a natural self-adjoint
extension, the Friedrichs extension. Therefore, from now on, we shall
extend $\Hk$ to the domain of the Friedrichs extension of $L + \Hk$,
as in the above Theorem. Let us denote by $\maD(\Hk)$ its
domain. Then Theorem \ref{theorem1} gives that $\maD(\Hk) =
  \maK_{2}^{2}(\TmS)$ for $\min_{p} Z(p) > 3/4$, and, in general,
\begin{equation}\label{eq.domain.Hk}
   \maD(\Hk) \subset \maK_{a+1}^{2}(\TmS), \quad \mbox{for } a < \eta
   := \min_{p} \sqrt{1/4 + Z(p)} \mbox{ and } a \le 1
\end{equation}
so that $\maD(\Hk) \subset \maK_{1}^1(\TmS) \subset H^1(\TmS)$, since
we assumed that $\min_{p} Z(p) > -1/4$.

We can now state a regularity theorem for the eigenfunctions of $\Hk$
near a point $p \in \maS$, or equivalently, for Bloch waves associated
to the wavevector $\vt k$.

\begin{theorem}\label{theorem2}
Assume that $V$ satisfies Assumptions 1 and 2 and let $u \in
\maD(\Hk)$ satisfy $\Hk u = \lambda u$, for some $\lambda \in
\RR$. Then we can find constants $a_p \in \RR$ such that
\begin{equation*}
    u - \sum_{p \in \maS} \chi_p \rho^{\sqrt{1/4 + Z(p)} - 1/2} \in
    \maK^{m+1}_{a' +1}(\TmS ), \quad \forall a' < \min_{p \in \maS}
    \sqrt{9/4 + Z(p)}\,.
\end{equation*}
In particular, $u \in \maK^{m+1}_{a + 1}(\TmS),$ where $a < \eta :=
\min_{p \in \maS} \sqrt{1/4 + Z(p)}$ and $m \in \ZZ_+$ is arbitrary.
\end{theorem}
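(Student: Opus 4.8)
The plan is to bootstrap from the known membership $u \in \maD(\Hk) \subset \maK^2_{a+1}(\TmS)$ (for $a < \eta$, $a \le 1$, by \eqref{eq.domain.Hk}) using Theorem \ref{theorem1} as a regularity engine, exactly as one does for elliptic boundary value problems on domains with conical points. First I would pick $L > C_0$ as in Theorem \ref{theorem1} and rewrite the eigenvalue equation as $(L + \Hk)u = (L + \lambda)u =: f_0$. Since $u \in \maK^2_{a+1}(\TmS)$ and multiplication by the constant $L+\lambda$ is harmless, $f_0 \in \maK^0_{a+1}(\TmS) \subset \maK^0_{a-1}(\TmS)$; here I would need the elementary fact that $\maK^s_b \subset \maK^{s}_{b'}$ whenever $b \ge b'$ near the singular set, which follows because $\rho$ is bounded. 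Applying the isomorphism $L + \Hk : \maK^{m+1}_{a+1} \to \maK^{m-1}_{a-1}$ of Theorem \ref{theorem1} with $m = 1$ does not immediately gain anything, so the real mechanism is an index-shift/bootstrap: I would instead invoke Theorem \ref{theorem1} in the form that says $u - \sum_p \chi_p \rho^{\sqrt{1/4+Z(p)}-1/2} \in \maK^{m+1}_2(\TmS)$ whenever $(L+\Hk)u = f \in H^{m-1}(\TmS)$, and show iteratively that $f_0 = (L+\lambda)u$ has the required Sobolev regularity.

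The key steps, in order, are: (1) Start with $u \in \maK^2_{a+1} \subset H^1(\TmS)$, hence $f_0 = (L+\lambda)u \in H^1(\TmS)$ locally away from $\maS$, and in fact $f_0 \in \maK^0_{a+1} \subset \maK^0_{a-1}$. (2) Decompose $u = u_{\reg} + \sum_p a_p \chi_p \rho^{\sqrt{1/4+Z(p)}-1/2}$ with $u_{\reg} \in \maK^{m+1}_2$; the singular terms $\chi_p \rho^{\sqrt{1/4+Z(p)}-1/2}$ are smooth in polar coordinates and lie in every $\maK^{m+1}_{a'+1}$ with $a' < \sqrt{9/4 + Z(p)}$ — this is the clean computation $\rho^{|\beta|-(a'+1)}\partial^\beta(\rho^{\sqrt{1/4+Z(p)}-1/2})$ being in $L^2$ near $p$ precisely when $\sqrt{1/4+Z(p)}-1/2 - (a'+1) + 3/2 > -3/2$ in the $L^2$ sense, i.e. $a' < \sqrt{9/4+Z(p)}$ after accounting for the $\rho^2 \dvol$ volume element in dimension $3$ and the extra factor from applying $\Delta$. (3) Bootstrap on $u_{\reg}$: since $u_{\reg} \in \maK^2_2$ and $(L+\Hk)u_{\reg} = f_0 - (L+\Hk)\big(\sum_p a_p \chi_p \rho^{\dots}\big)$, and the latter is controllably regular (the singular exponents are chosen to be "almost annihilated" by the model operator $-\Delta + Z(p)/\rho^2$, leaving only commutator terms with $\chi_p$ supported away from $p$, which are smooth), one gains two derivatives at a time and climbs to $u_{\reg} \in \maK^{m+1}_{a'+1}$ for all $a' < \min_p\sqrt{9/4+Z(p)}$, running the index $m$ up arbitrarily using that $f_0 = (L+\lambda)u$ improves in tandem. (4) Since $\eta < \min_p \sqrt{9/4 + Z(p)}$ and $a < \eta$, in particular $u = u_{\reg} + (\text{singular part}) \in \maK^{m+1}_{a+1}$ for all $m$, which gives the last sentence.

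The main obstacle I expect is step (3), the bootstrap: one must verify that the iteration genuinely gains regularity rather than stalling, which requires knowing that $L + \Hk$ is Fredholm (indeed an isomorphism) on the whole scale $\maK^{m+1}_{b+1} \to \maK^{m-1}_{b-1}$ for the relevant range of weights $b$, and — crucially — that no exceptional weight (no indicial root of the model operator $-\Delta + Z(p)/\rho^2$) is crossed as one passes from weight $a+1$ up toward weight $a'+1$ with $a'$ near $\sqrt{9/4+Z(p)}$. The indicial roots at $p$ are $-1/2 \pm \sqrt{1/4+Z(p)} + \ell$ for $\ell \in \ZZ_{\ge 0}$ (the radial exponents of $-\Delta + Z(p)/\rho^2$ on spherical harmonics), and the point of subtracting exactly the term $\chi_p \rho^{\sqrt{1/4+Z(p)}-1/2}$ is to absorb the single indicial root lying in the gap $(\eta - 1, \min_p\sqrt{9/4+Z(p)}-1)$, so that the remainder $u_{\reg}$ can be pushed all the way to the next root without obstruction. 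This is precisely the content packaged in Theorem \ref{theorem1} from \cite{HLNU1}; assembling it into the eigenfunction statement is then bookkeeping, but one must be careful that $\lambda \in \RR$ (guaranteed by self-adjointness of the Friedrichs extension) so that $f_0 = (L+\lambda)u$ stays real and in the correct space at each stage, and that the constants $a_p$ from Theorem \ref{theorem1} applied to $u$ coincide with those in the statement.
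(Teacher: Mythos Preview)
This theorem is not proved in the present paper: it is quoted in Section~\ref{sec:background} as a background result from the companion paper \cite{HLNU1}, so there is no in-paper argument to compare against. Your overall strategy --- rewrite $\Hk u = \lambda u$ as $(L+\Hk)u = (L+\lambda)u$, invoke the isomorphism and singular-expansion statements of Theorem~\ref{theorem1}, and bootstrap --- is the standard Kondrat$'$ev-type route and is what one expects the argument in \cite{HLNU1} to look like.

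There is, however, a genuine slip in your step~(2). You assert that the singular term $\chi_p\,\rho^{\sqrt{1/4+Z(p)}-1/2}$ lies in $\maK^{m+1}_{a'+1}$ for every $a' < \sqrt{9/4+Z(p)}$. This is false. A careful version of your own computation gives $\rho^{s}\in\maK^{m+1}_{a'+1}$ near $p$ (in dimension three) if and only if $a'+1 < s + 3/2$, i.e.\ $a' < s + 1/2$; with $s=\sqrt{1/4+Z(p)}-1/2$ this yields only $a' < \sqrt{1/4+Z(p)}$, not $\sqrt{9/4+Z(p)}$. If your claim in~(2) held, the subtraction would be pointless: $u = u_{\reg} + (\text{singular part})$ would already sit in $\maK^{m+1}_{a'+1}$ for the full range of $a'$. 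The whole content of the decomposition is that the $\ell=0$ indicial term $\rho^{\sqrt{1/4+Z(p)}-1/2}$ is precisely the obstruction to pushing the weight past $\eta$, and only after removing it can the remainder $u_{\reg}$ reach up to the next indicial root $-\tfrac12+\sqrt{9/4+Z(p)}$ (from $\ell=1$). You describe this mechanism correctly in your final paragraph, so~(2) looks like a momentary confusion rather than a structural misunderstanding; but as written, step~(2) contradicts step~(4), since the ``In particular'' conclusion for $u$ itself needs $a<\eta$ exactly so that the singular terms, and not only $u_{\reg}$, belong to $\maK^{m+1}_{a+1}$.
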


See also \cite{Kato51, Kato57} for some related classical results in
this area. Theorems \ref{theorem1} and \ref{theorem2} lead to an
estimate for the distance from an element in the domain of $\Hk$ to
the approximation spaces that we construct using graded meshes.

Next, recall the definition of Lagrange interpolants associated to a
mesh.  Let us choose $\PP$ to be a parallelopiped that is a
fundamental domain of the Lattice $\Lambda$. That is, $\RR^3 = \cup_{y
  \in \Lambda} (y + \overline{\PP})$ and all $y + \PP$ disjoint. Let
$\maT=\{T_i\}$ be a {\em mesh} on $\PP$, that is a mesh of $\PP$ with
tetrahedra $T_i$. We can identify this $\maT$ with a mesh $\maT'$ of
the fundamental region of the lattice $\maL$ (that is, to the
Brillouin zone of $\maL$). Fix an integer $m \in \NN$ that will play
the role of the order of approximation.  We denote by $S(\maT, m)$ the
finite element space associated to the degree $m$ Lagrange
tetrahedron. That is, $S(\maT, m)$ consists of all continuous
functions $\chi : \overline{\PP} \to \RR$ such that $\chi$ coincides
with a polynomial of degree $\leq m$ on each tetrahedron $T \in \maT$
and {\em $\chi$ is periodic}.  This means the values of $\chi$ on
corresponding faces coincide, so $\chi$ will have a continuous,
periodic extension to the whole space, or alternatively, can be
thought of as a continuous function on $\maT$. We shall denote by $w_I
= w_{I, \maT} \in S(\maT, m)$ the Lagrange interpolant of $w \in
H^2(\RR^3)$. Let us recall the definition of $w_{I, \maT}$.  First,
given a tetrahedron $T$, let $[t_0, t_1, t_2, t_3]$ be the barycentric
coordinates on $T$. The nodes of the degree $m$ Lagrange tetrahedron
$T$ are the points of $T$ whose barycentric coordinates $[t_0, t_1,
  t_2, t_3]$ satisfy $m t_j \in \ZZ$.  The {\em degree $m$ Lagrange
  interpolant} $w_{I, \maT}$ of $u$ is the unique function $w_{I,
  \maT} \in S(\maT, m)$ such that $w = w_{I, \maT}$ at the nodes of
each tetrahedron $T \in \maT$. The shorter notation $w_I$ will be used
when only one mesh is understood in the discussion.

The classical approximation result for Lagrange interpolants
 (\cite{BabuAziz, BrennerScott, Ciarlet91, SchwabBook}) can now be stated.
\begin{theorem}\label{athm1.classical}
Let $\maT$ be a mesh of a polyhedral domain $\PP \subset \RR^3$ with
the property that all tetrahedra comprising $\maT$ have angles $\geq
\alpha$ and edges $\leq h$.  Then there exists a constant $C(\alpha,
m)>0$ such that, for any $u\in H^{m+1}(\PP)$,
\begin{equation*}
	 \| u - u_I \|_{H^1(\PP)} \leq C(\alpha, m) h^m
         \|u\|_{H^{m+1}(\PP)}.
\end{equation*}
\end{theorem}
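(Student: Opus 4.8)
The plan is to run the classical scaling / Bramble--Hilbert argument element by element and then add up the contributions. First I would fix a reference tetrahedron $\hat T$ and note that, since $m\ge 1$ gives $m+1\ge 2 > 3/2$, we have the Sobolev embedding $H^{m+1}(\hat T)\hookrightarrow \maC(\overline{\hat T})$ in three dimensions; hence the degree-$m$ nodal interpolation operator $\hat\Pi\colon H^{m+1}(\hat T)\to \mathbb P_m$ (the space of polynomials of degree $\le m$) is well defined and bounded into $H^1(\hat T)$. Because $\hat\Pi$ fixes every polynomial of degree $\le m$, the operator $I-\hat\Pi$ annihilates $\mathbb P_m$, and the Deny--Lions / Bramble--Hilbert lemma produces a constant $C(m)$ with
\[
  \|\hat u - \hat\Pi\hat u\|_{H^1(\hat T)} \le C(m)\, |\hat u|_{H^{m+1}(\hat T)}
  \qquad \text{for all } \hat u\in H^{m+1}(\hat T).
\]

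Next I would transfer this to an arbitrary $T\in\maT$ through the affine bijection $F_T(\hat x)=B_T\hat x+b_T$ with $F_T(\hat T)=T$. Since the set of interpolation nodes and the space $\mathbb P_m$ are affine-invariant, interpolation commutes with $F_T$, i.e.\ $\widehat{u_I}=\hat\Pi\hat u$ for $\hat u:=u\circ F_T$. Combining the reference estimate (used for the $H^1$-seminorm, and with $k=0$ for the $L^2$ part) with the standard change-of-variables bounds for Sobolev seminorms, $|\hat u|_{H^k(\hat T)}\le C\|B_T\|^k|\det B_T|^{-1/2}|u|_{H^k(T)}$ and $|v|_{H^1(T)}\le C\|B_T^{-1}\|\,|\det B_T|^{1/2}|\hat v|_{H^1(\hat T)}$, I obtain the local bound
\[
  \|u - u_I\|_{H^1(T)} \le C(m)\,\|B_T\|^{m+1}\bigl(1+\|B_T^{-1}\|\bigr)\,|u|_{H^{m+1}(T)}.
\]

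The decisive step, and the one I expect to be the main obstacle, is to bound $\|B_T\|$ and $\|B_T^{-1}\|$ by the geometry of $T$. Here one uses the classical estimates $\|B_T\|\le C\,h_T$ and $\|B_T^{-1}\|\le C/\rho_T$, with $C$ depending only on $\hat T$, where $h_T=\diam(T)$ equals the longest edge of $T$ and $\rho_T$ is the inradius of $T$. The heart of the matter is then the purely geometric fact that the hypothesis that every (planar and dihedral) angle of every tetrahedron of $\maT$ is $\ge\alpha$ forces a uniform shape-regularity bound $h_T/\rho_T\le\sigma(\alpha)$; this equivalence between an angle condition and the ratio $h_T/\rho_T$ is genuinely more delicate in dimension three than in two, but it is classical. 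Granting it, and using $h_T\le h$ together with $h_T$ bounded by $\diam(\PP)$, we get $\|B_T\|^{m+1}(1+\|B_T^{-1}\|)\le C(\alpha)\,(h_T^{m+1}+h_T^{m+1}/\rho_T)\le C(\alpha)\,h_T^m\le C(\alpha)\,h^m$, so the local estimate reads $\|u-u_I\|_{H^1(T)}\le C(\alpha,m)\,h^m\,|u|_{H^{m+1}(T)}$.

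Finally I would square and sum over all $T\in\maT$, using $\sum_T|u|_{H^{m+1}(T)}^2=|u|_{H^{m+1}(\PP)}^2$ and $\sum_T\|u-u_I\|_{H^1(T)}^2=\|u-u_I\|_{H^1(\PP)}^2$, to conclude
\[
  \|u - u_I\|_{H^1(\PP)} \le C(\alpha,m)\,h^m\,|u|_{H^{m+1}(\PP)} \le C(\alpha,m)\,h^m\,\|u\|_{H^{m+1}(\PP)},
\]
which is the assertion. Along the way one should also check that the global interpolant $u_I$ assembled from the $F_T$-compatible local interpolants is indeed continuous across faces (and, in the periodic setting of this paper, periodic), so that it lies in $S(\maT,m)$.
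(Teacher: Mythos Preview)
Your outline is the standard Bramble--Hilbert/affine-scaling proof and is correct as written; in particular, the reduction to a reference element, the use of $\hat\Pi$ fixing $\mathbb P_m$, the transport estimates for $\|B_T\|$ and $\|B_T^{-1}\|$, and the summation over elements are exactly the usual ingredients. Note, however, that the paper does not give its own proof of this statement: Theorem~\ref{athm1.classical} is simply quoted as the classical interpolation error estimate, with references to \cite{BabuAziz, BrennerScott, Ciarlet91, SchwabBook}, and is then used as a black box inside the proof of Theorem~\ref{athm1.gen}. So there is nothing to compare against; your argument supplies precisely the proof that those references contain.
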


Finally, we recall two properties of functions in the weighted
Sobolev spaces $\maK^m_a(\TmS)$ that are useful for the analysis of 
the approximation scheme we use with graded meshes. 
The proofs of these lemmas are contained
in \cite{HunsickerNistorSofo} and are based on the definitions and
straightforward calculations.

\begin{lemma} \label{claim2}  
Let $D$ be a small neighborhood of a point $p\in \maS$ such that on
$D$, $\rho$ is given by distance to $p$.  Let $0<\gamma<1$ and denote
by $\gamma D$ the region obtained by radially shrinking around $p$ by
a factor of $\gamma$.  Then
\begin{equation*}
	\|w\|_{\maK^m_a(D)} = (\gamma)^{a-3/2}\|w\|_{ \maK^m_a(\gamma D) }.
\end{equation*}
\end{lemma}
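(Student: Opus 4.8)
**

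The plan is to prove this by a direct change of variables in the integrals defining the weighted Sobolev norms, exploiting the homogeneity of $\rho$ and of the differential operators $\partial^\beta$ under dilation. Since $D$ is a neighborhood on which $\rho(x) = |x-p|$ exactly, and $\gamma D$ is the radial contraction of $D$ by the factor $\gamma$ centered at $p$, the map $\Phi_\gamma : D \to \gamma D$, $\Phi_\gamma(x) := p + \gamma(x - p)$, is a bijection with Jacobian determinant $\gamma^3$. The heart of the argument is to track how each term $\rho^{|\beta|-a}\partial^\beta w$ in the $\maK^m_a$ norm transforms under pullback by $\Phi_\gamma$.

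First I would fix $p$ as the origin for notational convenience, so $\rho(x) = |x|$ on $D$ and $\gamma D = \{\gamma x : x \in D\}$. For a function $w$ defined on $\gamma D$, set $w_\gamma := w \circ \Phi_\gamma$, i.e. $w_\gamma(x) = w(\gamma x)$, which is defined on $D$. The chain rule gives $\partial^\beta (w_\gamma)(x) = \gamma^{|\beta|} (\partial^\beta w)(\gamma x)$. Also $\rho(x)^{|\beta|-a} = \gamma^{-(|\beta|-a)}\rho(\gamma x)^{|\beta|-a}$ since $|\gamma x| = \gamma|x|$. Therefore
\begin{equation*}
  \rho(x)^{|\beta|-a}\partial^\beta(w_\gamma)(x) = \gamma^{a}\,\big(\rho^{|\beta|-a}\partial^\beta w\big)(\gamma x).
\end{equation*}
Squaring, integrating over $D$, and substituting $y = \gamma x$ (so $dy = \gamma^3\,dx$, i.e. $dx = \gamma^{-3}dy$) yields
\begin{equation*}
  \int_D \big|\rho^{|\beta|-a}\partial^\beta(w_\gamma)\big|^2\,dx = \gamma^{2a}\gamma^{-3}\int_{\gamma D}\big|\rho^{|\beta|-a}\partial^\beta w\big|^2\,dy = \gamma^{2a-3}\,\|\rho^{|\beta|-a}\partial^\beta w\|^2_{L^2(\gamma D)}.
\end{equation*}
Summing over $|\beta|\le m$ and taking square roots gives $\|w_\gamma\|_{\maK^m_a(D)} = \gamma^{a-3/2}\|w\|_{\maK^m_a(\gamma D)}$. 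The final step is to reconcile this with the statement as written: the lemma writes $\|w\|_{\maK^m_a(D)}$ on the left, which is consistent once one interprets the identity as relating the norm of a single function restricted to the two regions via the natural identification $w \leftrightarrow w_\gamma$ implicit in the statement (the functions $w$ on $D$ and its rescaling on $\gamma D$ are being compared), so no genuine difficulty arises here — it is purely bookkeeping about which function lives on which domain.

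There is essentially no hard obstacle: the only points requiring care are (i) ensuring the substitution $y = \gamma x$ is legitimate, which holds because $\Phi_\gamma$ is a smooth diffeomorphism of $D$ onto $\gamma D$ with constant Jacobian, and (ii) confirming that $\rho$ really is the Euclidean distance to $p$ on all of $D$ and on $\gamma D$, which is exactly the hypothesis that $D$ is a neighborhood on which $\rho$ is given by distance to $p$ (and $\gamma D \subset D$ since $0 < \gamma < 1$, so $\rho$ retains this form there). One should also note that the constant $\gamma^{a-3/2}$ is independent of $m$ and of $w$, as the statement requires. The computation is local at the single point $p$, so no gluing or partition-of-unity argument is needed.
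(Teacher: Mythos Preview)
Your argument is correct and is precisely the ``straightforward calculation from the definitions'' that the paper alludes to; in fact the paper does not give its own proof of this lemma, simply citing \cite{HunsickerNistorSofo}. Your identification of the implicit convention (that the $w$ on the left is really the dilate $w_\gamma(x)=w(\gamma x)$ of the $w$ on the right) is exactly how the lemma is used later in the paper, e.g.\ in Equations \eqref{eqn.new5}--\eqref{eqn.new4}, so your bookkeeping is the intended reading.
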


\begin{lemma} \label{claim3}
If $m \geq m^{\prime}$, $a \geq a^{\prime}$ and $0 <\rho<\delta$ on
$D$, then
\begin{equation*}
    \|w\|_{\maK_{a^{\prime}}^{m^{\prime}}(D)} \leq 
    \delta^{a -a^{\prime}} \|w\|_{\maK^m_a(D)}.
\end{equation*}
\end{lemma}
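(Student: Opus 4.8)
The plan is to unwind the definition \eqref{eq.def.ws} of the two weighted norms and compare them term by term. For any domain $D$ one has, by definition, $\|w\|_{\maK^m_a(D)}^2 = \sum_{|\beta| \le m} \|\rho^{|\beta|-a}\partial^\beta w\|_{L^2(D)}^2$, and similarly for $\maK^{m'}_{a'}$. If $w \notin \maK^m_a(D)$ the right-hand side of the asserted inequality is infinite and there is nothing to prove, so we may assume all the $L^2$ norms that appear below are finite.

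First I would fix a multi-index $\beta$ with $|\beta| \le m'$ and factor the weight as $\rho^{|\beta|-a'} = \rho^{a-a'}\,\rho^{|\beta|-a}$. Since $a - a' \ge 0$ and $0 < \rho < \delta$ on $D$, we have the pointwise bound $\rho^{a-a'} \le \delta^{a-a'}$ on $D$, and hence $\|\rho^{|\beta|-a'}\partial^\beta w\|_{L^2(D)} \le \delta^{a-a'}\,\|\rho^{|\beta|-a}\partial^\beta w\|_{L^2(D)}$. This is the only place the hypothesis $a \ge a'$ is used.

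Next I would square these inequalities and sum over all $\beta$ with $|\beta| \le m'$. Because $m' \le m$, the index set $\{|\beta| \le m'\}$ is contained in $\{|\beta| \le m\}$, so the resulting partial sum of nonnegative terms is dominated by the full sum defining $\|w\|_{\maK^m_a(D)}^2$. Combining, $\|w\|_{\maK^{m'}_{a'}(D)}^2 \le \delta^{2(a-a')}\,\|w\|_{\maK^m_a(D)}^2$, and taking square roots gives the claim. There is essentially no obstacle here: the estimate is a two-line manipulation of the definition, with the two hypotheses $m \ge m'$ and $a \ge a'$ entering in exactly the transparent ways just indicated. (This is the same mechanism that proves Lemma \ref{claim2}, except that there a genuine change of variables $x \mapsto \gamma x$ on $\gamma D$ produces the sharp power $\gamma^{a-3/2}$ rather than an inequality.)
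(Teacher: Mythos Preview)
Your proof is correct and is exactly the ``straightforward calculation from the definitions'' that the paper alludes to (the paper does not spell out the argument here but refers to \cite{HunsickerNistorSofo}). There is nothing to add.
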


We can now continue to the definition of the mesh refinement technique and the proof
of the general approximation theorem underlying our two main theorems.

\section{Approximation and mesh refinement\label{sec3}}

Our two main theorems follow from standard results, such as Cea's
Lemma (for the proof of Theorem \ref{athm1.fem}) and the results used in
\cite{BabuOsborn1, BabuOsborn2, BabuOsborn3, BrambleOsborn, Osborn}
(for the proof of Theorem \ref{athm1.eig}), together with the following
underlying approximation theorem:

\begin{theorem}\label{athm1.gen}
There exists a sequence $\maT_n$ of meshes of $\TT$ that depends only
on the choice of a parameter $k \le 2^{-m/a}$, $0 < a <\eta$ and
$a\leq m$, with the following property. If $u \in
\maK_{a+1}^{m+1}(\TmS)$, then the modified Lagrange interpolant $u_{I,
  \maT_n} \in S(\maT_n, m)$ of $u$ satisfies
\begin{equation*}
	\|u - u_{I, \maT_n} \|_{\maK^1_1 (\TmS )} \leq C
        \dim(S_n)^{-m/3} \|u \|_{\maK_{a+1}^{m+1}(\TmS)},
\end{equation*}
where $C$ depends only on $m$ and $a$ (so it is independent of $n$ and
$u$).
\end{theorem}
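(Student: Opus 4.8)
The plan is to build the meshes $\maT_n$ by the usual dyadic graded refinement toward the singular set $\maS$, and then to estimate the interpolation error \emph{locally}, splitting $\TT$ into the ``singular'' tetrahedra touching a point of $\maS$ and the ``regular'' tetrahedra away from $\maS$. On each regular tetrahedron one has uniform ellipticity of $\rho$ (it is bounded above and below), so the weighted Sobolev norm $\maK^{m+1}_{a+1}$ is equivalent to the ordinary $H^{m+1}$ norm there, and the classical estimate of Theorem \ref{athm1.classical} applies. The key quantitative point is that after $n$ refinements with scaling parameter $k$ the mesh size on the tetrahedra at distance $\sim 2^{-j}$ from $\maS$ is $\sim k^{n-j} \, 2^{-j}$, and one checks the meshes can be arranged so that the \emph{global} effective mesh parameter controlling the $H^1$ error scales like $k^n$, while $\dim(S_n) \sim (\#\text{tetrahedra}) \sim k^{-3n}$ up to constants depending only on $m$ and $\maT_0$. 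Combining these gives $k^n \sim \dim(S_n)^{-1/3}$, hence an $h^m$-type bound of the form $\dim(S_n)^{-m/3}$ once $k = 2^{-m/a}$ (or smaller) is chosen.

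The heart of the argument is the singular tetrahedra. The strategy here is a dyadic decomposition: write the neighborhood $D$ of $p\in\maS$ as a union of annular pieces $D_j = 2^{-j}D \setminus 2^{-(j+1)}D$, and estimate $\|u - u_I\|_{\maK^1_1(D_j)}$ on each. On $D_j$ the function $\rho$ is comparable to $2^{-j}$, so $\maK^1_1$ is again comparable (with a $j$-dependent weight) to $H^1$; using the scaling Lemma \ref{claim2} to rescale $D_j$ to a fixed reference annulus $D_0$ of unit size, applying the classical interpolation estimate there on the (bounded number of) rescaled tetrahedra of diameter $\sim k^{n-j}$, and rescaling back, one gets
\begin{equation*}
  \|u - u_I\|_{\maK^1_1(D_j)} \;\le\; C\, (k^{n-j})^{m}\, 2^{-j a}\, \|u\|_{\maK^{m+1}_{a+1}(D_j)} .
\end{equation*}
(The factor $2^{-ja}$ is exactly what the weights in $\maK^1_1$ versus $\maK^{m+1}_{a+1}$ contribute under the rescaling, using Lemma \ref{claim3} to absorb the difference of smoothness and weight indices.) Summing over $j = 0, 1, \dots, n$, the geometric series $\sum_j (k^{n-j})^m 2^{-ja} = k^{nm}\sum_j (2^{-a} k^{-m})^{j}$ is controlled precisely when $k \le 2^{-m/a}$ — this is where the hypothesis on $k$ enters — and the sum is $\le C k^{nm}$. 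One must also handle the single innermost tetrahedron surrounding $p$ (where $u$ itself is small in $\maK^1_1$ by Lemma \ref{claim3}, since its weight index exceeds that of $\maK^{m+1}_{a+1}$, so no derivatives of the interpolant are even needed there — one can take $u_I$ to vanish, which is the ``modified'' in modified Lagrange interpolant), together with the finitely many ``transition'' tetrahedra where the grading meets the quasi-uniform region, each of which is handled by the regular estimate.

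The main obstacle I expect is bookkeeping the geometry of the refinement so that all the mesh-size and dimension estimates hold with constants depending only on $m$ and $a$: one needs the refinement rule to preserve a uniform lower bound on the angles of all tetrahedra (shape-regularity), to keep the number of tetrahedra in each dyadic annulus $D_j$ bounded independently of $j$ and $n$, and to make the count $\dim(S_n) \asymp k^{-3n}$ precise. This is where the detailed description of the $k$-refinement algorithm in Section \ref{sec3} is needed. Once that geometric scaffolding is in place, the analytic estimate is the dyadic-sum computation above, and assembling the regular and singular contributions with $C$ uniform gives the stated bound $\|u - u_{I,\maT_n}\|_{\maK^1_1(\TmS)} \le C \dim(S_n)^{-m/3}\|u\|_{\maK^{m+1}_{a+1}(\TmS)}$.
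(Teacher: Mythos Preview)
Your overall strategy --- decompose into a far region, a sequence of annuli around each $p\in\maS$, and an innermost tetrahedron; rescale each annulus to unit size; apply the classical interpolation estimate; rescale back; treat the innermost piece via the modified interpolant --- matches the paper's. But the geometry of the graded mesh in your proposal is inverted, and this breaks the quantitative argument.

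In the actual $k$-refinement every tetrahedron is split into eight at each step (midpoint away from $\maS$, ratio $k$ toward $p$), so $\dim(S_n)\asymp 2^{3n}$, not $k^{-3n}$; the annulus $\maR_{pj}$ sits at distance $\sim k^{j}$ (not $2^{-j}$) from $p$, with mesh size $\sim k^{j}2^{-(n-j)}$ (not $k^{n-j}2^{-j}$). With the correct scales the local estimate becomes
\[
\|u-u_I\|_{\maK^1_1(\maR_{pj})}\le C\,2^{-m(n-j)}\,(k^{j})^{a}\,\|u\|_{\maK^{m+1}_{a+1}(\maR_{pj})}=C\,2^{-mn}\,\|u\|_{\maK^{m+1}_{a+1}(\maR_{pj})}
\]
(using $k^{a}=2^{-m}$): every annulus contributes the \emph{same} factor $2^{-mn}$, and the global bound follows by squaring and summing the local estimates (the $\maK$-norms are $L^2$-based), with no geometric series at all. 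Your series step is where the argument fails concretely: $\sum_j (2^{-a}k^{-m})^{j}$ converges iff $k>2^{-a/m}$, which is the \emph{opposite} of the hypothesis $k\le 2^{-m/a}$ (since $a\le m$ gives $a/m\le 1\le m/a$, hence $2^{-m/a}\le 2^{-a/m}$). With your stated geometry the dominant term is $j=n$, of size $2^{-na}$, far larger than your target $k^{nm}=2^{-nm^2/a}$. The fix is not to patch the series but to swap the roles of $k$ and $1/2$ in the mesh description as above.
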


In this section we will define the mesh refinement process and prove
Theorem \ref{athm1.gen}.
The first step is to describe the refinement procedure that results
in our sequence of meshes (or triangulations).  This is based on the
construction in \cite{BNZ3D1} and in \cite{bey95}, thus we refer
the reader to those papers for details, and here give only an outline
and state the critical properties.  The second step is to prove a
sequence of simple lemmas used in the estimates.  The third step is to
prove the estimate separately on smaller regions. This uses the
scaling properties of the meshes in Lemmas \ref{claim2} and \ref{claim3} 
together with Theorem \ref{athm1.classical}.

\subsection{Construction of the meshes}
We continue to keep the approximation degree $m$ fixed throughout this
section.  Fix a parameter $a$ and let $k = 2^{-m/a}$. In
our estimates, we will chose $a$ such that $ a < \eta := \min_{p}
\sqrt{1/4 + Z(p)}$ and $a\leq m$. Let $l$ denote the smallest distance
between the points in $\maS$.  Choose an initial mesh $\maT_0$ of
$\PP$ with tetrahedra such that all singular points of $V$ (\ie all
points of $\maS$) are among the vertices of $\maT_0$ and no
tetrahedron has more than one vertex in $\maS$. We assume that this
mesh is such that if $F_1$ and $F_2$ are two opposite faces of $\PP$,
which hence correspond to each other through periodicity, then the
resulting triangulations of $F_1$ and $F_2$ will also correspond to
each other, that is, they are congruent in an obvious sense.

We start with a special refinement of an arbitrary tetrahedron $T$
that has one of the vertices in the set $\maS$. Our assumptions then
guarantee that all the other vertices of $T$ will not be in
$\maS$. Motivated by the refinement in \cite{bey95, BNZ3D1, MR2824854, MR2522959, LMN10}, we define
our {\em $k$-refinement algorithm for a single tetrahedron} that
divides $T$ into eight sub-tetrahedra as follows.

\begin{algorithm}\label{def.n1} {\bf $k$-refinement for a single
tetrahedron}:\ Let $x_0, x_1, x_2, x_3$ be the four vertices of $T$.

We denote $T$ by its vertex set $\{x_0, x_1, x_2, x_3\}$. Suppose that
$x_0 \in \maS$, so that $x_0$ is the one and only vertex that will be
refined with a ratio $k\in (0, 1/2]$. We first generate new nodes
$x_{ij}$, $0\leq i<j\leq 3$, on each edge of $T$, such that
$x_{ij}=(x_i+x_j)/2$ for $1\leq i< j\leq 3$ and
$x_{0j}=(1-k)x_0+k x_j$ for $1\leq j\leq 3$. Note that the
node $x_{ij}$ is on the edge connecting $x_{i}$ and $x_j$. Connecting
these nodes $x_{ij}$ on all the faces, we obtain 4 sub-tetrahedra and
one octahedron. The octahedron then is cut into four tetrahedra using
$x_{13}$ as the common vertex. Therefore, after one refinement, we
obtain eight sub-tetrahedra (Figure \ref{fig.tetrarefine}), namely,
\begin{eqnarray*}
  &\{x_0, x_{01}, x_{02}, x_{03}\}, \ \{x_1, x_{01}, x_{12},
  x_{13}\},\ \{x_2, x_{02}, x_{12}, x_{23}\},\ \{x_3, x_{03}, x_{13},
  x_{23}\}\\
  &\{x_{01}, x_{02}, x_{03}, x_{13}\}, \ \{x_{01}, x_{02}, x_{12},
  x_{13}\},\ \{x_{02}, x_{03}, x_{13}, x_{23}\},\ \{x_{02}, x_{12},
  x_{13}, x_{23}\}.
\end{eqnarray*}
\end{algorithm}

\begin{figure}
\centerline{\includegraphics[scale=0.26]{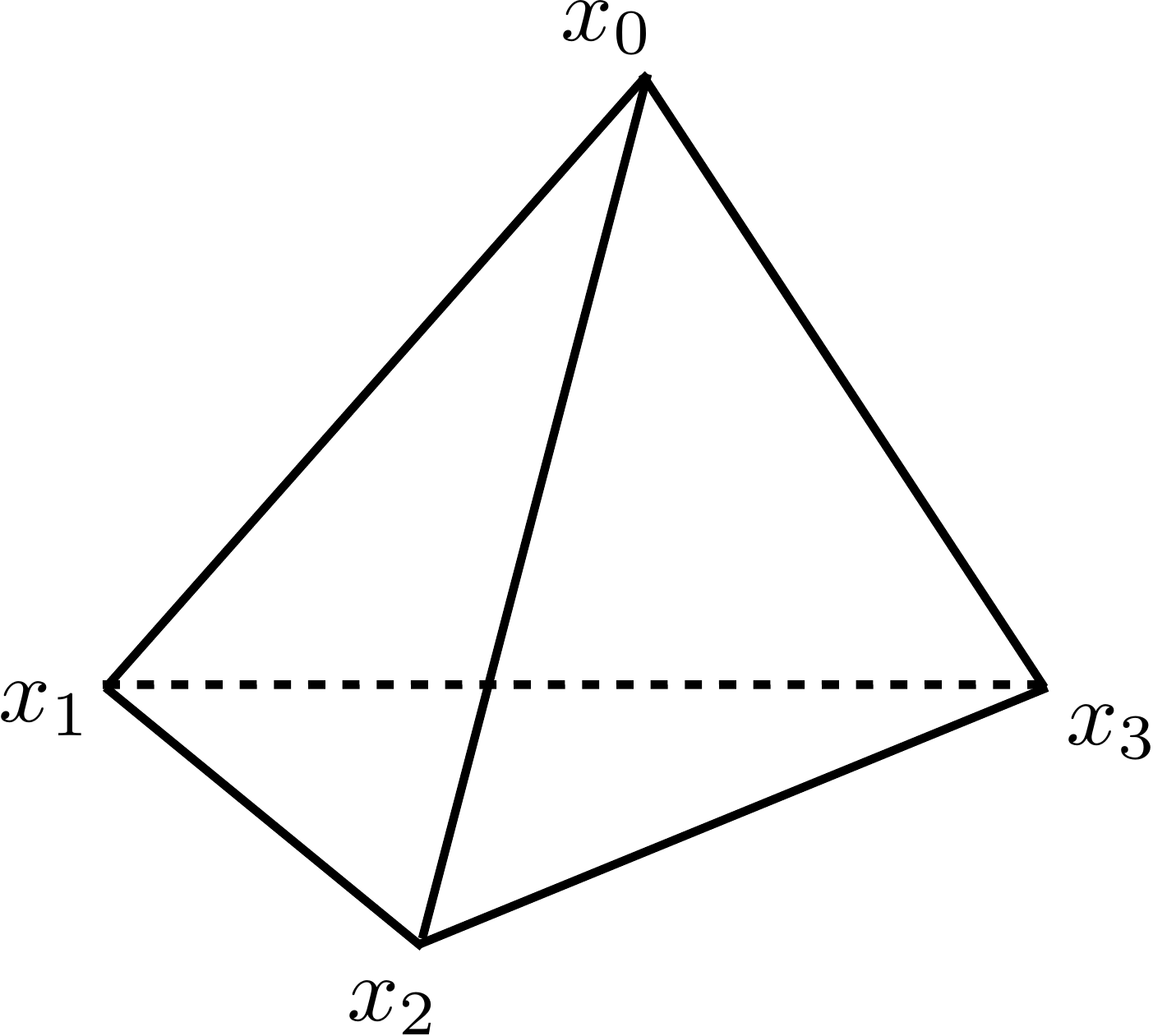} 
\hspace{3cm}\includegraphics[scale=0.26]{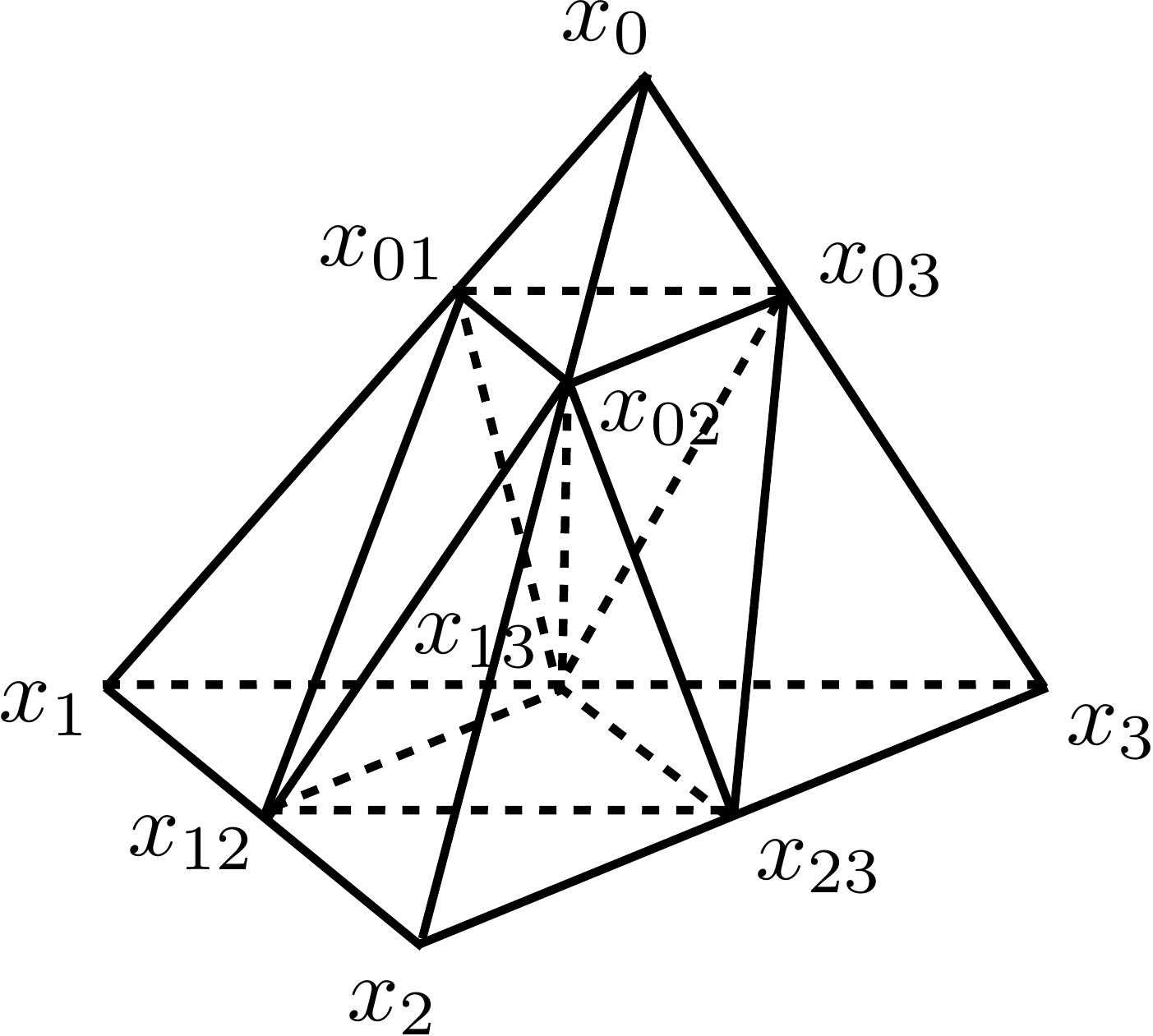}}
\caption{The initial tetrahedron $\{x_0, x_1, x_2, x_3\}$ (left);
  eight sub-tetrahedra after one $k$-refinement (right),
  $k=\frac{|x_0x_{01}|}{|x_0x_1|} = \frac{|x_0x_{02}|}{|x_0x_2|}
  = \frac{|x_0x_{03}|}{|x_0x_3|}$.}
\label{fig.tetrarefine}
\end{figure}

\begin{algorithm} \label{def.n2} {\bf $k$-refinement for a mesh}:\
Let $\maT$ be a triangulation of the domain $\PP$ such that all points
in $\maS$ are among the vertices of $\maT$ and no tetrahedron contains
more than one point in $\maS$ among its vertices. Then we divide each
tetrahedron $T$ of $\maT$ that has a vertex in $\maS$ using the
$k$-refinement and we divide each tetrahedron in $T$ that has no
vertices in $\maS$ using the $1/2$-refinement. The resulting mesh will
be denoted $k(\maT)$. We then define $\maT_n = k^n(\maT_0)$, where
$\maT_0$ is the intial mesh of $\PP$.
\end{algorithm}

\begin{remark}
According to \cite{bey95}, when $k=1/2$, which is the case when
the tetrahedron under consideration is away from $\maS$, the recursive
application of Algorithm \ref{def.n1} on the tetrahedron generates
tetrahedra within at most three similarity classes. On the other hand,
if $k <1/2$, the eight sub-tetrahedra of $T$ are not necessarily
similar. Thus, with one $k$-refinement, the sub-tetrahedra of $T$
may belong to at most eight similarity classes. Note that the first
sub-tetrahedra in Algorithm \ref{def.n1} is similar to the original
tetrahedron $T$ with the vertex $x_0\in \maS$ and therefore, a further
$k$-refinement on this sub-tetrahedron will generate eight
children tetrahedra within the same eight similarity classes as
sub-tetrahedra of $T$. Hence, successive $k$-refinements of a
tetrahedron $T$ in the initial triangulation $\maT_0$ will generate
tetrahedra within at most three similarity classes if $T$ has no
vertex in $\maS$. On the other hand, successive $k$-refinements
of a tetrahedron $T$ in the initial triangulation will generate
tetrahedra within at most $1+7\times 3=22$ similarity classes if $T$
has a point in $\maS$ as a vertex. Thus, our $k$-refinement is
conforming and yields only non-degenerate tetrahedra, all of which will
belong to only finitely many similarity classes.
\end{remark}

\begin{remark}
Recall that our initial mesh $\maT_0$ has matching restrictions to
corresponding faces. Since the singular points in $\maS$ are not on
the boundary of $\PP$, the refinement on opposite boundary faces of
$\PP$ is obtained by the usual mid-point decomposition. Therefore, the
same matching property will be inherited by $\maT_n$. In particular,
we can extend $\maT_n$ to a mesh in the whole space by periodicity. We
will, however, not make use of this periodic mesh on the whole space.
\end{remark}

For each point $p \in \maS$ and each $j$, we denote by $\maV_{pj}$ the
union of all tetrahedra of $\maT_j$ that have $p$ as a vertex.  Thus
$\maV_{pj}$ is obtained by scaling the tetrahedra in $\maV_{p0}$ by a
factor of $k^j$ with center $p$. In particular, the level $n \ge j$
refinements of $\maT_0$ give rise to a mesh on $\maR_{pj}:=
\maV_{p(j-1)} \smallsetminus \maV_{pj}$. Define
\begin{equation*}
  \Omega := \PP \smallsetminus \cup_{p \in \maS} \maV_{p0}.
\end{equation*}
According to Definition \ref{def.n2}, both $\Omega$ and $\cup_{p \in
  \maS} \maV_{p0}$ are triangulated using the $k$-refinement.  For
each tetrahedron in $\maV_{p0}$ we use the $k$-refinement for a single
tetrahedron, while for $\Omega$ we use the $1/2$--refinement for
meshes, which is, of course, a uniform refinement. Then, we can
decompose $\PP$ as the union
\begin{equation}\label{eq.decomp}
	\PP = \Omega \cup_{p \in \maS} \Big( \cup_{j=1}^{n} \maR_{pj}
        \cup \maV_{pn} \Big),
\end{equation}
where each set in the union is a union of tetrahedra in
$\maT_n$.

\begin{remark}\label{rk.45}
Note that the size of each simplex of $\maT_n$ contained in $\Omega$
is $\maO( 2^{-n})$, the size of each simplex of $\maT_n$ contained in
$\maR_{pj}$ is $\maO(k^{j} 2^{-(n-j)})$, and the size of
$\maV_{pn}$ is $\maO(k^{n})$. In addition, the number of
tetrahedra in $\maT_n$ is $\maO(2^{3n})$ (see Algorithm
\ref{def.n2}).
\end{remark}

We now define the finite element approximation $u_n\in S(\maT_n, m)$
to the equation $(L + \Hk)v=f$, where $\maT_n$ is obtained by
applying $n$ times the $k$-refinements to $\maT_0$, where
$k=2^{-m/a}$, $0< a< \eta$, $a\leq m$, and $\lambda
>0$ satisfies Theorem \ref{theorem1}. Then $u_n$ is defined for any
$v_n\in S(\maT_n, m)$ by
\begin{equation}\label{eqn.new8} 
  (\Hk u_n, v_n) + \lambda (u_n, v_n) := ( \nabla u_n, \nabla
  v_n)_{L^2} + (( V + \lambda)u_n, v_n)_{L^2} = (f, v_n)_{L^2}.
 \end{equation}
Note that Theorem \ref{theorem1} gives that the finite element
solution $u_n\in S(\maT_n, m)\subset \maK^1_1$ is well defined by
\eqref{eqn.new8}. The approximation properties of $u_n$ are discussed
in Theorem \ref{athm1.eig}.

\subsection{Proof of Theorem \ref{athm1.gen}} 
Note that the singular expansion of Theorem \ref{theorem2} shows that
the value of an eigenvalue $u$ of $\Hk$ at a singular point in $\maS$ may not be
defined. Therefore, we must define the \textit{modified} degree $m$
Lagrange ``interpolant'' $u_{I, n} = u_{I, \maT_n}$ associated to
the mesh $\maT_n$, such that
\begin{equation}\label{eqn.new7}
  \left\{\begin{array}{ll} u_{I,n}(x) = u(x)\ {\rm{ for\ any\ node
  }}\ x \notin\maS\\
  u_{I,n}(x)=0\ {\rm{if}}\ x\in\maS.\\
  \end{array}\right.
\end{equation}
Alternatively, we can take the modified Lagrange interpolant to be
zero on the whole tetrahedron that contains a singular point. By
construction, the restriction of $\maT_n$ to $\maR_{pj}$ scales to the
restriction of $\maT_{n-j+1}$ to $\maR_{p1}$. From now on, we refer to
$u_{I, n}=u_{I, \maT_n}$ as the \textit{modified} interpolation
defined in \eqref{eqn.new7}. The following lemma is based on the
definition of the $k$-refinement and the discussion in Remark
\ref{rk.45}.

\begin{lemma} \label{claim1}  
For all $x \in \maR_{pj}$, $u_{I, n}(x) = u_{I, n-j+1}(
k^{-(j-1)}(x) ),$ where $k^{-(j-1)}(x):= p +
(x-p)/k^{(j-1)}$ is the dilation with ratio $k^{-(j-1)}$ and
center $p$.
\end{lemma}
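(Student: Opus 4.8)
The plan is to prove Lemma \ref{claim1} by carefully unwinding the definition of the modified Lagrange interpolant $u_{I,n}$ on the region $\maR_{pj}$ and exploiting the self-similarity of the $k$-refinement that was set up in the construction of the meshes. Recall that $\maR_{pj} = \maV_{p(j-1)} \smallsetminus \maV_{pj}$ and that, as noted just before the lemma, the restriction of $\maT_n$ to $\maR_{pj}$ scales (via the dilation $k^{-(j-1)}$ centered at $p$) to the restriction of $\maT_{n-j+1}$ to $\maR_{p1}$. So the geometric backbone is already in place: what remains is to check that the \emph{interpolation operation} commutes with this scaling.

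First I would fix notation: write $\Phi_j := k^{-(j-1)}$ for the dilation $x \mapsto p + (x-p)/k^{j-1}$, so that $\Phi_j$ maps the mesh $\maT_n|_{\maR_{pj}}$ bijectively onto the mesh $\maT_{n-j+1}|_{\maR_{p1}}$, carrying tetrahedra to tetrahedra and nodes to nodes (this is exactly the content of the ``scales to'' statement, which follows by induction on $j$ from the fact that in Algorithm \ref{def.n1} the first sub-tetrahedron $\{x_0, x_{01}, x_{02}, x_{03}\}$ is similar to $T$ with ratio $k$ about $x_0 = p$). Next I would recall that the degree $m$ Lagrange interpolant on a tetrahedron $T$ is defined purely combinatorially: it is the unique degree $\le m$ polynomial agreeing with the prescribed values at the nodes $[t_0,t_1,t_2,t_3]$ with $mt_i \in \ZZ$, and barycentric coordinates are affine invariants. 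Hence for any affine bijection $\Phi$ carrying a tetrahedron $T'$ onto $T$ and its degree $m$ nodes onto those of $T$, one has $(w \circ \Phi)_{I,T'} = w_{I,T} \circ \Phi$ for the ordinary interpolant. Applying this tetrahedron-by-tetrahedron with $\Phi = \Phi_j^{-1}$ on each simplex of $\maT_{n-j+1}|_{\maR_{p1}}$ gives, for the unmodified interpolants, $u_{I,n}(x) = (u \circ \Phi_j^{-1})_{I, n-j+1}(\Phi_j(x))$ for $x \in \maR_{pj}$ — but here $u$ is pulled back, whereas the lemma as stated has $u_{I,n-j+1}$ applied to $\Phi_j(x)$ with no pullback on $u$.

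The resolution of that apparent discrepancy — and the one place requiring a little care — is that $\maR_{pj}$ does \emph{not} contain the singular point $p$ in its interior; $p \notin \overline{\maR_{pj}}$ in the relevant sense (it is the center of the annular shell, excised). So every node of $\maT_n$ lying in $\maR_{pj}$ is a node $x \notin \maS$, and there the modified interpolant coincides with the ordinary one: $u_{I,n}(x) = u(x)$ at those nodes, and likewise $u_{I,n-j+1}(y) = u(y)$ at the corresponding nodes $y = \Phi_j(x) \in \maR_{p1}$. Thus on $\maR_{pj}$ both sides of the claimed identity are genuine degree $\le m$ piecewise polynomials determined by the \emph{same} nodal values of $u$ at corresponding nodes under the affine correspondence $\Phi_j$, so they must agree. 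This is the substantive step; I would state it as: the modification in \eqref{eqn.new7} only affects the innermost patch $\maV_{pn}$ (and tetrahedra touching $\maS$), never $\maR_{pj}$, so on $\maR_{pj}$ the modified interpolant behaves exactly like the classical Lagrange interpolant, for which affine covariance is standard.

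I expect the main obstacle to be purely bookkeeping rather than conceptual: being precise about which nodes of $\maT_n$ fall in $\maR_{pj}$ versus on its boundary $\partial \maR_{pj}$ (the inner and outer spheres are where $\maR_{pj}$ meets $\maV_{pj}$ and $\maV_{p(j-1)}$), and checking that the correspondence of \emph{boundary} nodes is also respected so that the piecewise-polynomial functions glue continuously — but this is automatic since $\Phi_j$ is a global affine map carrying the whole mesh pattern over, and continuity across shared faces is inherited from $S(\maT_n,m) \subset C(\overline{\PP})$. One should also make the trivial remark that $\Phi_j$ is the \emph{identity} when $j=1$, so the formula reads $u_{I,n}(x) = u_{I,n}(x)$ there, consistent with the definition. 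With these points in place the lemma follows directly, and this is precisely the scaling identity needed in the next step to reduce the global estimate to estimates on the single shell $\maR_{p1}$ via Lemmas \ref{claim2} and \ref{claim3} together with Theorem \ref{athm1.classical}.
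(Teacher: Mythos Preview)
Your overall approach --- self-similarity of the mesh under the dilation $\Phi_j = k^{-(j-1)}$ together with affine covariance of the Lagrange interpolant --- is precisely what the paper has in mind (no detailed proof is given; the lemma is said to follow ``from the definition of the meshes $\maT_k$ and from the fact that interpolation commutes with changes of variables''). You also correctly spot the discrepancy between the covariance identity your argument actually produces, namely $u_{I,n}(x) = (u\circ\Phi_j^{-1})_{I,\,n-j+1}(\Phi_j(x))$, and the literal statement $u_{I,n}(x) = u_{I,n-j+1}(\Phi_j(x))$.

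The gap is in how you resolve that discrepancy. You assert that ``both sides of the claimed identity are \dots determined by the same nodal values of $u$ at corresponding nodes,'' but this is not so: at a node $x\in\maR_{pj}$ the left side takes the value $u(x)$, whereas the right side takes the value $u_{I,n-j+1}(\Phi_j(x)) = u(\Phi_j(x))$, and for a generic $u$ these differ. Your (correct) observation that the modification in \eqref{eqn.new7} is inactive on $\maR_{pj}$ is irrelevant here --- it does not collapse $u(x)$ and $u(\Phi_j(x))$. What your covariance argument genuinely establishes is $\widehat{u_{I,n}} = (\hat u)_{I,\,n-j+1}$ on $\maR_{p1}$, where $\hat w(y):=w(k^{j-1}y)$; inspection of the proof of Theorem~\ref{athm1.gen} shows this is exactly the identity the paper actually invokes (see the equality $\widehat{(u_{I,n})} = \hat u_{I,\,n-j+1}$ and the analogous step on $\maV_{pn}$). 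So the lemma's notation is slightly abusive; your proof should simply target the covariance form and note that this is what is used downstream, rather than try to justify the literal equality, which fails in general.
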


Recall that $\rho^2 V \in C^\infty(\overline{\TmS}) \cap \maC(\TT)$
and $\min_{p} Z(p) > -1/4$. That is, $V$ satisfies Assumptions 1 and 2.

We can now give the proof of Theorem \ref{athm1.gen}.

\begin{proof}
Recall that $\maV_{p0}$ consists of the tetrahedra of the initial mesh
$\maT_0$ that have $p$ as a vertex and that all the regions $\maV_p$
are away from each other (they are closed and disjoint).  We used this
to define $\Omega := \PP \smallsetminus \cup_p \maV_{p0}$. The region
$\maV_{pj}$ is obtained by dilating $\maV_{p}$ with the ratio
$k^{j} < 1$ and center $p$. Finally, recall that $\maR_{pj} =
\maV_{p(j-1)} \smallsetminus \maV_{pj} $. Let $R$ be any of the
regions $\Omega$, $\maR_{pj}$, or $\maV_{pn}$. Since the union of
these regions is $\PP$, it is enough to prove that
\begin{equation*}
    \|u - u_{I, \maT_n} \|_{\maK^1_1 (R \smallsetminus \maS )} \leq
    C \dim(S_n)^{-m/3}\, \|u\|_{\maK^{m+1}_{a+1} (R \smallsetminus
      \maS )},
\end{equation*}
for a constant $C$ independent of $R$ and $n$. The result will follow
by squaring all these inequalities and adding them up. In fact, since
$\dim(S_n)^{-m/3} = \maO( 2^{-nm})$, it is enough to prove
\begin{equation}\label{eq.R}
	\|u - u_{I, \maT_n} \|_{\maK^1_1 (R \smallsetminus \maS )}
        \leq C 2^{-nm}\, \|u\|_{\maK^{m+1}_{a+1} (R \smallsetminus
          \maS )},
\end{equation}
again for a constant $C$ independent of $R$ and $n$.

If $R = \Omega := \PP \smallsetminus \cup_p \maV_{p0}$, the estimate
in \eqref{eq.R} follows right away from Theorem
\ref{athm1.classical}. For the other estimates, recall that $0 < k \le
2^{-m/a}$, where $0 < a < \eta$ and $a\leq m$.  We next establish the
desired interpolation estimate on the region $R = \maR_{pj}$, for any
fixed $p \in \maS$ and $j = 1, 2, \ldots, n$. Let $\hat u(x) =
u(k^{j-1}x)$.  From Lemmas \ref{claim2} and \ref{claim1}, we have
\begin{equation*}
	\| u-u_{I, n} \|_{\maK^1_1(\maR_{pj})} = (k^{j-1})^{1/2} \|
        \hat u- \widehat{(u_{I, n}) }
        \|_{\maK^1_1(\maR_{p1})}=(k^{j-1})^{1/2} \| \hat u- \hat u_{I,
          n-j+1} \|_{\maK^1_1(\maR_{p1})}.
\end{equation*}
Since $\maK^m_a(\maR_{p1})$ is equivalent to $H^m(\maR_{p1})$, we can
apply Theorem \ref{athm1.classical} with $h =\maO( 2^{-(n-j+1)})$ to get
\begin{equation}\label{eqn.new6}
	\| u-u_{I, n} \|_{\maK^1_1(\maR_{pj})}\leq C
        (k^{j-1})^{1/2} 2^{-m(n-j+1)} \| \hat u
        \|_{\maK^{m+1}_{a+1}(\maR_{p1})}.
\end{equation}
Now applying Lemma \ref{claim2} to scale back again and using also
$k = 2^{-m/a}$, we get that the right hand side in
\eqref{eqn.new6}
\begin{multline*}
	C (k^{j-1})^{1/2} 2^{-m(n-j+1)} \| \hat u
        \|_{\maK^{m+1}_{a+1}(\maR_{p1})} = C (k^{j-1})^a
        2^{-m(n-j+1)} \| u \|_{\maK^{m+1}_{a+1}(\maR_{pj})} \\
        \le C 2^{-mn} \|u\|_{\maK^{m+1}_{a+1}(\maR_{pj})}.
\end{multline*}
This proves the estimate in \eqref{eq.R} for $R = \maR_{pj}$.

It remains to prove this estimate for $R = \maV_{pn}$.  For any
function $w$ on $\maV_{pn}$, we let $\hat w(x) = w(k^n x)$ be a
function on $\maV_{p}$.  Therefore, by Lemma \ref{claim2}
\begin{equation}\label{eqn.new5}
    \|u - u_{I,n} \|_{\maK^1_1 (\maV_{pn})}
    = (k^n)^{1/2} \|\widehat{u - u_{I,n}} \|_{\maK^1_1 (\maV_{p})}
\end{equation}
and by \ref{claim1} (which follows from the definition of the meshes
$\maT_k$ and from the fact that interpolation commutes with changes of
variables),
\begin{equation}\label{eqn.new4}
    (k^n)^{1/2} \|\widehat{u - u_{I,n}} \|_{\maK^1_1 (\maV_{p})}
  = (k^n)^{1/2} \|\hat u - \hat u_{I,0} \|_{\maK^1_1 (\maV_{p})}.
\end{equation}
Now let $\chi$ be a smooth cutoff function on $\maV_{p}$ such that
$\chi=0$ in a neighborhood of $p$ and $=1$ at every other node of
$\maV_{p}$.

Define $\hat v:=\hat u-\chi\hat u$. Then, by \eqref{eqn.new7},
\begin{eqnarray}
    (k^n)^{1/2} \|\hat u - \hat u_{I, 0} \|_{\maK^1_1
    (\maV_{p})}& =& (k^n)^{1/2} \|\hat v+\chi \hat u - \hat u_{I,
    0} \|_{\maK^1_1 (\maV_{p})} \nonumber\\
   & \leq& (k^n)^{1/2} (\|\hat v\|_{\maK^1_1
    (\maV_{p})}+\|\chi\hat u-\hat u_{I, 0}\|_{\maK^1_1
    (\maV_{p})})\nonumber\\
  &=& (k^n)^{1/2} (\|\hat v\|_{\maK^1_1 (\maV_{p})}+\|\chi\hat
  u-(\chi\hat u)_{I, 0}\|_{\maK^1_1 (\maV_{p})}). \label{eqn.new1}
\end{eqnarray}
Since $\chi$ vanishes in the neighborhood of $p$ we can consider
multiplication by $\chi$ as $\rho^{\infty}$ times a degree 0
b-operator. Thus it is a bounded operator on any weighted Sobolev
space. Thus
\begin{eqnarray}\label{eqn.new2}
  \|\hat v\|_{\maK^1_1(\maV_{p})}\leq \|\hat
  v\|_{\maK^m_1(\maV_{p})}\leq\|\hat
  u\|_{\maK^m_1(\maV_{p})}+\|\chi\hat u\|_{\maK^m_1(\maV_{p})} \leq
  C\|\hat u\|_{\maK^m_1(\maV_{p})},
\end{eqnarray}
where $C$ depends on $m$ and, through $\chi$, the nodes in the
triangulation.

Using \eqref{eqn.new5}, \eqref{eqn.new4}, \eqref{eqn.new1},
\eqref{eqn.new2}, Lemma \ref{claim3}, and Theorem \ref{athm1.classical},
we have
\begin{eqnarray*}
    \|u - u_{I,n}\|_{\maK^1_1 (\maV_{pn})} & \leq& C
    (k^n)^{1/2}(\|\hat u\|_{\maK^1_1(\maV_{p})} + \|\chi\hat u -
    (\chi\hat u)_{I, 0}\|_{\maK^1_1 (\maV_{p})}) \\
    & \leq & C(k^n)^{1/2}(\|\hat u\|_{\maK^1_1(\maV_{p})} +
    \|\chi\hat u\|_{H^{m+1} (\maV_{p})})\\
    & \leq & C(k^n)^{1/2}(\|\hat u\|_{\maK^1_1(\maV_{p})} +
    \|\hat u\|_{\maK_1^{m+1} (\maV_{p})})\\
    & \leq & C\| u\|_{\maK_1^{m+1}(\maV_{pn})}\leq Ck^{na}\|
    u\|_{\maK_{a+1}^{m+1}(\maV_{pn})}\leq
    C2^{-mn} \|u\|_{\maK_{a+1}^{m+1}(\maV_{pn})}.
\end{eqnarray*}

This proves the estimate of Equation \eqref{eq.R} for $R = V_{pn}$ and
completes the proof of Theorem \ref{athm1.gen}.
\end{proof}

\section{Applications to Finite Element Methods\label{sec.FEM}}

We can now turn to the proofs of the theorems stated in the introduction.
First, Theorem \ref{athm1.eig} follows from our general
approximation result, Theorem \ref{athm1.gen}, and the standard results
on approximations of eigenvalues and eigenvectors (eigenfunctions in
our case) discussed, for instance, in \cite{BabuOsborn1, BabuOsborn2, BabuOsborn3, BrambleOsborn,
  Osborn}.
More precisely, using the notation introduced in the introduction, 
we have the following. Let us denote by $E(\lambda)$
the eigenspace of $\Hk$ corresponding to the eigenvalue $\lambda$ and by
$E_1(\lambda) \subset E(\lambda)$, the subspace consisting of
vectors of length one. Then the following result is well known
(see for instance Equations (1.1) and (1.2) in \cite{BabuOsborn2}).
We state it only for our operator $\Hk$, although it is valid for more
general self-adjoint operators with compact resolvent.

\begin{theorem} 
\label{thm.eig.a}
There exists a constant $C>0$ with the following property. Let $V
\subset \maK_{1}^{1}(\TmS)$ be a finite dimensional subspace and $R :
\maK_{1}^{1}(\TmS) \to V$ the projection in the energy norm. Let
$w_{j, n} \in V$ be an eigenbasis of $R \Hk R$, namely $R \Hk Rw_{j,
  n}= R \Hk w_{j, n} = \lambda_{j,n} w_{j,n}$, with the
$\lambda_{j,n}$ arranged in increasing order in $j$. Then
\begin{equation*}
  |\lambda_j - \lambda_{j, n}| \le C \sup_{u \in E_1(\lambda)}
  \inf_{\chi \in V} \|u - \chi\|_{\maK_1^1(\TmS)}^2
\end{equation*}
and, furthermore
\begin{equation*}
  \|v_{j} - w_{j, n} \|_{\maK^1_1(\TmS)} \leq C \sup_{u \in
    E_1(\lambda)} \inf_{\chi \in V} \|u - \chi\|_{\maK_1^1(\TmS)},
\end{equation*}
for a suitable eigenvector $u_j \in E(\lambda_j)$.
\end{theorem}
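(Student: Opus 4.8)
The plan is to reduce Theorem~\ref{thm.eig.a} to the classical spectral approximation estimates for self-adjoint, bounded-below operators with compact resolvent, as presented in \cite{BabuOsborn1, BabuOsborn2, BabuOsborn3, BrambleOsborn, Osborn}. First I would fix the shift $L > C_0$ so that $L + \Hk$ is positive and defines, via $a(y,w) = ((L+\Hk)y,w)$, an inner product on $\maK^1_1(\TmS) = \maD((L+\Hk)^{1/2})$ whose induced norm (the \emph{energy norm}) is equivalent to the $\maK^1_1(\TmS)$-norm; this equivalence is exactly what Theorem~\ref{theorem1} provides, together with the domain identification $\maD(\Hk)\subset \maK^1_1(\TmS)\subset H^1(\TmS)$ from \eqref{eq.domain.Hk}. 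Since $\Hk$ has compact resolvent on $L^2(\TT)$ (its domain $\maK^2_2(\TmS)$, or more generally $\maK^2_{a+1}(\TmS)$, embeds compactly into $L^2$), the abstract theory applies verbatim: eigenvalues of $\Hk$ and of its Ritz--Galerkin restriction $R\Hk R$ to any finite-dimensional $V\subset \maK^1_1(\TmS)$ are compared through the quantity
\begin{equation*}
  \ep_n(\lambda) := \sup_{u \in E_1(\lambda)} \inf_{\chi \in V} \|u - \chi\|_{\text{energy}}.
\end{equation*}

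The key steps, in order, are: (1) record that the Riesz projection $R$ in the energy inner product is the $a(\cdot,\cdot)$-orthogonal projection onto $V$, so $a((I-R)u, \chi) = 0$ for all $\chi \in V$; (2) quote the classical eigenvalue estimate $|\lambda_j - \lambda_{j,n}| \le C\,\ep_n(\lambda_j)^2$ and the eigenvector estimate $\|u_j - w_{j,n}\|_{\text{energy}} \le C\,\ep_n(\lambda_j)$ for a suitable $u_j \in E(\lambda_j)$ — these are Equations~(1.1) and~(1.2) of \cite{BabuOsborn2}, valid because the spectral projection associated to $\lambda_j$ is finite-rank and the resolvent is compact; (3) convert from the energy norm to the $\maK^1_1(\TmS)$-norm using the norm equivalence from step~(1) of the previous paragraph, which changes only the constant $C$. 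Because $\ep_n$ is monotone in $V$ and $E_1(\lambda)$ is compact in $\maK^1_1(\TmS)$, the supremum over $E_1(\lambda)$ is attained and the argument is uniform over the (finite-dimensional) eigenspace.

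The only genuine verification — and the step most likely to need care — is that the hypotheses of the abstract spectral approximation theorem are actually met in our setting: namely that $\Hk$ (equivalently $L + \Hk$) is self-adjoint with compact resolvent and that $V \subset \maK^1_1(\TmS) = \maD((L+\Hk)^{1/2})$ so that $R\Hk R$ is well defined as a self-adjoint operator on $V$. Self-adjointness is the Friedrichs extension discussed after Theorem~\ref{theorem1}; compactness of the resolvent follows from $\maD(\Hk) \subset \maK^2_{a+1}(\TmS)$ and the compact embedding $\maK^2_{a+1}(\TmS) \hookrightarrow L^2(\TT) = \maK^0_0(\TmS)$, which is a Rellich-type statement for these weighted spaces (a gain of one derivative plus the decay weight compensating the singular points). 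Once these structural facts are in place, the two displayed inequalities are immediate applications of the cited literature, and the proof is essentially a matter of citation plus the norm-equivalence bookkeeping.

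\begin{proof}
Fix $L > C_0$ as in Theorem~\ref{theorem1}, so that $L + \Hk$ is positive and self-adjoint (the Friedrichs extension), with $\maD((L+\Hk)^{1/2}) = \maK^1_1(\TmS)$ and energy norm $\|u\|_{\text{en}} := a(u,u)^{1/2} = \|(L+\Hk)^{1/2}u\|_{L^2}$ equivalent to the $\maK^1_1(\TmS)$-norm by Theorem~\ref{theorem1} and \eqref{eq.domain.Hk}. Since $\maD(\Hk) \subset \maK^2_{a+1}(\TmS)$ embeds compactly into $L^2(\TT)$, the operator $L + \Hk$ has compact resolvent, hence discrete spectrum accumulating only at $+\infty$. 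The Riesz (energy) projection $R$ onto the finite-dimensional $V \subset \maK^1_1(\TmS)$ is the $a(\cdot,\cdot)$-orthogonal projection, and $R\Hk R$ is a self-adjoint operator on $V$ whose eigenvalues $\lambda_{j,n}$, ordered increasingly, are the Ritz values. The abstract spectral approximation estimates for positive self-adjoint operators with compact resolvent (see Equations~(1.1) and~(1.2) of \cite{BabuOsborn2}, and also \cite{BabuOsborn1, BabuOsborn3, BrambleOsborn, Osborn}) give, with $\ep_n(\lambda) := \sup_{u \in E_1(\lambda)} \inf_{\chi \in V}\|u - \chi\|_{\text{en}}$,
\begin{equation*}
  |\lambda_j - \lambda_{j,n}| \le C\, \ep_n(\lambda_j)^2, \qquad
  \|u_j - w_{j,n}\|_{\text{en}} \le C\, \ep_n(\lambda_j),
\end{equation*}
the latter for a suitable $u_j \in E(\lambda_j)$, where $C$ depends only on the spectral gap around $\lambda_j$. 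Replacing $\|\cdot\|_{\text{en}}$ by the equivalent norm $\|\cdot\|_{\maK^1_1(\TmS)}$ (both in the definition of $\ep_n$ and in the eigenvector bound) changes only the constant $C$, and yields the two displayed inequalities of the theorem.
\end{proof}
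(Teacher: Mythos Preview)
Your proposal is correct and matches the paper's approach exactly: the paper does not give an independent proof of Theorem~\ref{thm.eig.a} but simply quotes it as a well-known consequence of the abstract spectral approximation results, pointing to Equations~(1.1) and~(1.2) of \cite{BabuOsborn2} (and the companion references \cite{BabuOsborn1, BabuOsborn3, BrambleOsborn, Osborn}). Your write-up is in fact more detailed than the paper's, since you explicitly verify the structural hypotheses (self-adjointness via the Friedrichs extension, compact resolvent via the weighted Rellich embedding, and the energy-norm/$\maK^1_1$-norm equivalence) that the paper leaves implicit.
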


The proof of Theorem \ref{athm1.eig} will then be obtained from Theorem
\ref{thm.eig.a} as follows.

\begin{proof} (of Theorem \ref{athm1.eig}).
We need to estimate $\sup_{u \in E_1(\lambda)} \inf_{\chi \in S_n} \|u
- \chi\|_{\maK_1^1}$. To this end, let us notice that any $u \in
E(\lambda) \subset \maK_1^1(\TmS)$ satisfies $(\mu + \Hk) u = (\mu +
\lambda) u$. Theorem \ref{theorem1} then gives
$\|u\|_{\maK_{a+1}^{m+1}} \le C_{m, \lambda} \|u\|_{\maK_{a-1}^{m-1}}$
for a suitably large $\mu$ that depends on $\lambda$ and $a < \eta$.
A bootstrap argument then gives for any $u \in E(\lambda)$ that
$\|u\|_{\maK_{a+1}^{m+1}} \le C'_{m, \lambda} \|u\|_{\maK_{1}^{1}}$.
Theorem \ref{athm1.gen} then gives for $u \in E_1(\lambda_j)$ (thus
$\|u\|_{\maK_{1}^{1}} = 1$), the following.
\begin{multline*}
  \sup_{u \in E_1(\lambda)} \inf_{\chi \in S_n} \|u -
  \chi\|_{\maK_1^1(\TmS)} \leq \sup_{u \in E_1(\lambda)} \|u - u_{I,
    \maT_n} \|_{\maK^1_1 (\TmS )} \\ 
  \leq C \sup_{u \in E_1(\lambda)} \dim(S_n)^{-m/3}
  \|u\|_{\maK_{a+1}^{m+1}(\TmS)} \leq c(m, \lambda_j)
  \dim(S_n)^{-m/3}.
\end{multline*} 
The proof of Theorem \ref{athm1.eig} is now complete.
\end{proof}

Next, the proof of Theorem \ref{athm1.fem} follows
from Theorem \ref{athm1.gen}, the Lax-Milgram Lemma and Cea's lemma.
We note some consequences of this theorem.

\begin{remark}\label{rk.opt}
First, in the case $f\in H^{m-1}(\TmS)$, by the estimate in
Equation \eqref{nnn1}, we have
\begin{eqnarray*}
\|v - v_{n} \|_{\maK^1_1(\TmS)} \leq C \dim(S_n)^{-m/3}
        \|f\|_{\maK^{m-1}_{a-1}(\TmS)}\leq C \dim(S_n)^{-m/3}
        \|f\|_{H^{m-1}(\TmS)},
\end{eqnarray*}
as long as the index in Theorem \ref{athm1.fem} is chosen such that
$0< a\leq 1$.
\end{remark}

As in the classical Finite Element Method, a duality argument yields
the following $L^2$-convergence result.
\begin{theorem}
In addition to the assumptions and notation in Theorem
\ref{athm1.fem}, assume that $0<a\leq 1$. Then the following $L^2$
estimate holds
\begin{eqnarray*}
  \|v - v_{n} \|_{L^2(\TT)} \leq C \dim(S_n)^{(-m-1)/3}
      {\|f\|_{H^{m-1}(\TT)}}.
\end{eqnarray*}
\end{theorem}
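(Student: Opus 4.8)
The plan is to carry out the classical Aubin--Nitsche duality argument in the weighted-Sobolev framework and then to insert the energy-norm rate already provided by Theorem \ref{athm1.fem}. Write $e := v - v_n$ for the error. The first step is to introduce the dual solution. Since $V$ is real-valued, $\Hk$ and hence $L + \Hk$ are self-adjoint on $L^2(\TT)$, so we may let $w \in \maD(\Hk)$ be the unique solution of $(L + \Hk)w = e$. As $0 < a < \eta$, Theorem \ref{theorem1} with $m = 1$ gives $w \in \maK^2_{a+1}(\TmS)$ together with the a priori bound $\|w\|_{\maK^2_{a+1}(\TmS)} \le C\|e\|_{\maK^0_{a-1}(\TmS)}$. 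The hypothesis $a \le 1$ enters exactly here: the weight $\rho^{1-a}$ is then bounded on $\TT$, so $\|e\|_{\maK^0_{a-1}(\TmS)} = \|\rho^{1-a}e\|_{L^2(\TT)} \le C\|e\|_{L^2(\TT)}$, whence $\|w\|_{\maK^2_{a+1}(\TmS)} \le C\|e\|_{L^2(\TT)}$.

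Next I would represent $\|e\|_{L^2}$ through the bilinear form $a(y,\chi) := ((L + \Hk)y,\chi)$. Testing the dual equation against $e$ in the weak sense --- legitimate for every test function in the form domain $\maK^1_1(\TmS)$, and genuinely needed here since $v_n$, and hence $e$, is not in $\maD(\Hk)$ --- gives $\|e\|^2_{L^2(\TT)} = (e,e)_{L^2} = a(e, w)$. Galerkin orthogonality $a(e, \chi) = 0$ for all $\chi \in S_n$, obtained by subtracting \eqref{eq.FEM.eq} from the continuous identity $a(v, \chi) = (f, \chi)$, then yields $\|e\|^2_{L^2(\TT)} = a(e, w - \chi)$ for every $\chi \in S_n$. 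The form $a$ is continuous on $\maK^1_1(\TmS) \times \maK^1_1(\TmS)$: the first-order terms are dominated by the $H^1$ norm, while the singular term is controlled by $|(Vy,\chi)| \le C\int_{\TT}\rho^{-2}|y|\,|\chi| \le C\|\rho^{-1}y\|_{L^2}\|\rho^{-1}\chi\|_{L^2} \le C\|y\|_{\maK^1_1}\|\chi\|_{\maK^1_1}$, using Assumption 1 (so $\rho^2 V$ is bounded) and the bound $\|\rho^{-1}\cdot\|_{L^2} \le \|\cdot\|_{\maK^1_1}$ built into the definition of $\maK^1_1$. Taking $\chi$ to be the modified Lagrange interpolant of $w$ and applying Theorem \ref{athm1.gen} with approximation degree $1$ --- admissible since $k = 2^{-m/a} \le 2^{-1/a}$ and $a \le 1$, and since $S(\maT_n,1) \subseteq S_n$ with $\dim S(\maT_n,1) \asymp \dim S_n$ --- gives
\[
  \|w - \chi\|_{\maK^1_1(\TmS)} \le C\dim(S_n)^{-1/3}\|w\|_{\maK^2_{a+1}(\TmS)} \le C\dim(S_n)^{-1/3}\|e\|_{L^2(\TT)} .
\]
Combining the last three observations, $\|e\|^2_{L^2(\TT)} \le C\|e\|_{\maK^1_1(\TmS)}\dim(S_n)^{-1/3}\|e\|_{L^2(\TT)}$, and cancelling one factor of $\|e\|_{L^2(\TT)}$ leaves
\[
  \|v - v_n\|_{L^2(\TT)} \le C\dim(S_n)^{-1/3}\|v - v_n\|_{\maK^1_1(\TmS)} .
\]

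The last step is to substitute the energy-norm convergence rate. Theorem \ref{athm1.fem} gives $\|v - v_n\|_{\maK^1_1(\TmS)} \le C\dim(S_n)^{-m/3}\|f\|_{\maK^{m-1}_{a-1}(\TmS)}$, and Remark \ref{rk.opt} (again using $0 < a \le 1$) bounds $\|f\|_{\maK^{m-1}_{a-1}(\TmS)}$ by $\|f\|_{H^{m-1}(\TT)}$. Composing the two displayed inequalities yields precisely $\|v - v_n\|_{L^2(\TT)} \le C\dim(S_n)^{(-m-1)/3}\|f\|_{H^{m-1}(\TT)}$.

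I expect the only genuinely non-routine points to be: (i) the regularity of the dual problem when the data lies merely in $L^2$, and the absorption of the resulting estimate into an $L^2$ bound --- precisely where $a \le 1$ is essential, through the boundedness of $\rho^{1-a}$; and (ii) the continuity of $a(\cdot,\cdot)$ on the weighted space $\maK^1_1(\TmS)$, that is, the control of the inverse-square potential term, which rests on Assumption 1 and the embedding implicit in the definition of $\maK^1_1$. Everything else is the standard Aubin--Nitsche skeleton together with the interpolation estimate of Theorem \ref{athm1.gen}.
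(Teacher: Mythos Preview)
Your proof is correct and follows essentially the same Aubin--Nitsche duality argument as the paper's own sketch: introduce the dual problem $(L+\Hk)w = v - v_n$, use Galerkin orthogonality to replace $w$ by an element of $S_n$, invoke the $\maK^2_{a+1}$ regularity of $w$ via Theorem~\ref{theorem1} together with $L^2 \hookrightarrow \maK^0_{a-1}$ for $a \le 1$, and then combine the resulting $\dim(S_n)^{-1/3}$ factor with the energy-norm rate of Theorem~\ref{athm1.fem}. The only cosmetic difference is that the paper subtracts the Galerkin projection $w_n$ of $w$ and appeals to Theorem~\ref{athm1.fem} with $m=1$, whereas you subtract the degree-$1$ interpolant and appeal directly to Theorem~\ref{athm1.gen}; since Theorem~\ref{athm1.fem} is itself derived from Theorem~\ref{athm1.gen} via C\'ea's lemma, the two routes are equivalent.
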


\begin{proof}
We sketch the proof by using the duality argument in weighted Sobolev
spaces. Consider the equation
\begin{eqnarray}\label{eqn.nnn1}
  (L+\Hk) w \hspace{0.09cm} = \hspace{0.05cm} v-v_n\quad
  \mbox{in} \quad \TT.
\end{eqnarray}
(So we use periodic boundary conditions on $\PP$.) {The
  definition of the Galerkin projection $v_n$ of $v$, Equation
  \eqref{eq.FEM.eq}, then gives}
\begin{eqnarray*}
  (v-v_n, v-v_n) = ((L+\Hk) w, v-v_n) = ((L+\Hk) (w-w_n),
  v-v_n),
\end{eqnarray*}
where $w_n$ is the finite element solution of Equation
\eqref{eqn.nnn1} on $\maT_n$.  {We also have
  $\|w\|_{\maK^2_{a+1}(\TmS)} \le C \|v - v_n\|_{L^2(\TT)}$ by
  Theorem \ref{theorem1}, since $v - v_n \in L^2(\TT) \subset
  \maK_{a-1}^{0}(\TmS)$.} Therefore,  applying Theorem \ref{athm1.fem}
 to $v-v_n \in L^2(\TT)$ and $m=1$, we have
\begin{eqnarray*}
  \|v-v_n\|_{L^2(\TT)} & \leq & {C} \|w-w_n\|_{\maK^1_1(\TT)}
  \|v-v_n\|_{\maK^1_1(\TT)}/\|v-v_n\|_{L^2(\TT)}\\
  & \leq & C \dim(S_n)^{-1/3} \|v-v_n\|_{\maK^1_1(\TT)} \leq C
  \dim(S_n)^{(-m-1)/3} \|f\|_{H^{m-1}(\TT)}.
\end{eqnarray*}
This completes the proof.
\end{proof}

\subsection{Condition number of the stiffness matrix}
 
It is important that the discrete system $S_n$ that we use
is well-conditioned for us to be able to realise the theoretical approximation
bounds in practice.  
Thus we need to additionally obtain upper and lower bounds on the eigenvalues of the stiffness
matrix that arises in calculation.

 Recall the standard
nodal basis function $\phi_j$ of the space $S_n := S(\maT_n, m)$. It
consists of functions that are equal to $1$ at one node and equal to
zero at all the other nodes. For convenience, we now instead consider
the rescaled bases $\varphi_j:=h_j^{-1/2}\phi_j$, where $h_j$ is the
diameter of the support patch for $\phi_j$. Then, we consider the
scaled stiffness matrix
\begin{equation}\label{eq.def.stiffn}
  A_n := \big(a(\varphi_i, \varphi_j)\big)
\end{equation}
from our graded finite element discretization \eqref{eq.FEM.eq}. In
practice, $A_n$ can be obtained from the usual stiffness matrix
$\big(a(\phi_i, \phi_j)\big)$ by a diagonal preconditioning
process. We point out that similar scaled matrices were considered in
\cite{BS89, Li12} for condition numbers of other Galerkin-based
methods.
  
For a symmetric matrix $A$, we shall denote by $\lambda_{max}(A)$ the
largest eigenvalue of $A$ and by $\lambda_{min}(A)$ the smallest
eigenvalue of $A$. Thus the spectrum of $A$ is contained in
$[\lambda_{min}(A), \lambda_{max}(A)]$, but is not contained in any
smaller interval. We first have the following estimates regarding properties of functions.

\begin{lemma}\label{thm.inverse}
Let $T_i$ be a tetrahedron in the mesh $\maT_n$ and let $\diam(T_i)$
denote the diameter of $T_i$. Then, for any $\mu_n\in S_n$ and $\mu\in
H^1(\Omega)$, there exists a constant $C>0$ independent of $n$,
$\mu_n$ and $\mu$, such that
\begin{eqnarray}
  & \|\mu_n\|_{H^1(T_i)}\leq C 
  {\diam(T_i)^{1/2}} \|\mu_n\|_{L^\infty(T_i)}\leq C
  \|\mu_n\|_{L^{6}(T_i)}, \label{eqn.inverse}\\
  & \|\mu\|_{L^6(\Omega)}\leq
  C\|\mu\|_{H^1(\Omega)}. \label{eqn.embedding}
\end{eqnarray}
\end{lemma}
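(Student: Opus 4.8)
The lemma collects three essentially classical facts; I would prove them in the order (first inequality of \eqref{eqn.inverse}), (second inequality of \eqref{eqn.inverse}), (then \eqref{eqn.embedding}), all by reduction to a fixed reference configuration. First I would record the key structural fact established in Section~\ref{sec3}: every tetrahedron $T_i$ of every mesh $\maT_n$ is similar to one of finitely many reference tetrahedra $\hat T^{(1)}, \dots, \hat T^{(N)}$ (at most $22$ similarity classes per initial tetrahedron with a vertex in $\maS$, and at most $3$ otherwise, by the Remark following Algorithm~\ref{def.n2}). Hence there is an affine map $F_i : \hat T \to T_i$ of the form $F_i(\hat x) = \diam(T_i)\, B \hat x + b$, where $B$ ranges over a finite set of fixed orthogonal-up-to-scaling matrices; in particular the linear part of $F_i$ has norm comparable to $\diam(T_i)$ and its inverse has norm comparable to $\diam(T_i)^{-1}$, with constants depending only on the finite list of reference tetrahedra, hence only on $m$ and the mesh construction, not on $n$.

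**First inequality of \eqref{eqn.inverse}.** Pull $\mu_n$ back to $\hat\mu_n := \mu_n \circ F_i$, a polynomial of degree $\le m$ on $\hat T$. On the fixed finite-dimensional space $\mathcal P_m(\hat T)$ all norms are equivalent, so $\|\hat\mu_n\|_{H^1(\hat T)} \le C_m \|\hat\mu_n\|_{L^\infty(\hat T)}$. Then I would track the powers of $h := \diam(T_i)$ under the change of variables: $\|\mu_n\|_{L^\infty(T_i)} = \|\hat\mu_n\|_{L^\infty(\hat T)}$; the $L^2$ part of the $H^1$ norm scales by $h^{3/2}$ and the gradient part, because of the chain rule, by $h^{3/2} \cdot h^{-1} = h^{1/2}$; since $h \le \diam(\PP)$ is bounded, the gradient term dominates and one gets $\|\mu_n\|_{H^1(T_i)} \le C h^{1/2} \|\mu_n\|_{L^\infty(T_i)}$, which is exactly the claimed bound.

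**Second inequality of \eqref{eqn.inverse}, and \eqref{eqn.embedding}.** For the $L^\infty$--$L^6$ comparison on $T_i$, again pass to $\hat T$: on $\mathcal P_m(\hat T)$ we have $\|\hat\mu_n\|_{L^\infty(\hat T)} \le C_m \|\hat\mu_n\|_{L^6(\hat T)}$ by norm equivalence; scaling gives $\|\mu_n\|_{L^\infty(T_i)} = \|\hat\mu_n\|_{L^\infty(\hat T)}$ and $\|\hat\mu_n\|_{L^6(\hat T)} = h^{-1/2}\|\mu_n\|_{L^6(T_i)}$, so $h^{1/2}\|\mu_n\|_{L^\infty(T_i)} \le C \|\mu_n\|_{L^6(T_i)}$, which together with the first step chains to the middle term. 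Finally, \eqref{eqn.embedding} is just the Sobolev embedding $H^1(\Omega) \hookrightarrow L^6(\Omega)$ in dimension three on the fixed bounded Lipschitz domain $\Omega$ (equivalently on $\PP$, or on $\TT$); this constant depends only on $\Omega$, not on $n$. One should note that \eqref{eqn.embedding} is stated for a single fixed domain, so no scaling is needed there at all.

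**Main obstacle.** There is no deep obstacle; the only thing requiring care is bookkeeping of the scaling exponents and making explicit that the constant $C$ is uniform in $n$. The latter is exactly where the finiteness of the number of similarity classes of tetrahedra in $\{\maT_n\}_n$ (guaranteed by the $k$-refinement construction, Remark after Algorithm~\ref{def.n2}) is indispensable: without it the norm-equivalence constants on $\mathcal P_m(\hat T)$ could degenerate, and the inverse-type estimate \eqref{eqn.inverse} would fail. I would therefore state that fact explicitly at the start and invoke it for each inequality.
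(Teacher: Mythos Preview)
Your proposal is correct and follows essentially the same approach as the paper's proof: the paper simply cites Gilbarg--Trudinger for \eqref{eqn.embedding} and invokes the standard inverse estimates from \cite{BrennerScott, Ciarlet78} together with the finiteness of similarity classes for \eqref{eqn.inverse}, while you spell out explicitly the scaling/pull-back argument that underlies those inverse estimates. The key structural input---that the $k$-refinement yields only finitely many similarity classes of tetrahedra, making the reference-element constants uniform in $n$---is exactly what both arguments rest on.
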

Furthermore, writing $\mu_n=\sum c_j\varphi_j$, where $\varphi_j :=
h_j^{-1/2}\phi_j$ are rescaled basis functions, then
\begin{eqnarray}\label{eqn.norme}
  C^{-1/2} \sum_{j \in node(T_i)}\, c_j^2 \leq \, { \diam(T_{i}) }
  \|\mu_n\|^2_{L^{\infty}(T_i)} \leq \, C \sum_{j\in node(T_i)} c_j^2.
\end{eqnarray}
\begin{proof}
We shall show \eqref{eqn.inverse} and \eqref{eqn.norme} since
\eqref{eqn.embedding} is a standard result in \cite{GT77}. Recall that all
the tetrahedra $T_i$ belong to a finite class of shapes (or similarity
classes) in our graded triangulation. Thus, the bounded constant $C$
in \eqref{eqn.inverse} follows from the inverse estimates in
\cite{BrennerScott, Ciarlet78}.

As for \eqref{eqn.norme}, note $\mu_n = \sum c_i \varphi_i = \sum \bar
c_i \phi_i$. Based on the definition of the basis function $\varphi_i$
and of the graded mesh,
\begin{eqnarray}\label{eqn.eqn}
  C^{-1} \diam(T_i) ^{1/2}_i \bar c_i \leq c_i \leq C
  \diam(T_i)^{1/2}_i \bar c_i.
\end{eqnarray}  
On the reference tetrahedron $\hat T$, both $\|\hat v\|_{L^\infty}$
and $(\sum_{j\in node(\hat T)} \bar c_j^2)^{1/2}$ are norms for the
finite element function $\hat v|_{\hat T}$, where $\hat v$ is obtained by the
usual scaling process and the summation on $\bar c_j$ is for all the
nodes in $\hat T$. Based on equivalence of all norms  for a finite
dimensional space, we have
\begin{equation*}
  C(\sum_{j\in node(\hat T)} \bar c_j^2)^{1/2} \, \leq \, \|\hat
  v\|_{L^{\infty}(\hat T)} \, \leq \, C(\sum_{j\in node(\hat T)} \bar
  c_j^2)^{1/2}.
\end{equation*}
This, together with \eqref{eqn.eqn}, implies
\begin{equation*}
  C \sum_{j\in node(T_i)} c_j^2 \leq \diam(T_i)
  \|v\|^2_{L^{\infty}(T_i)}\leq C\sum_{j\in node(T_i)} c_j^2,
\end{equation*}
which completes the proof.
\end{proof}

Therefore, we have the following estimates on the eigenvalues of the
stiffness matrix.

\begin{lemma}\label{lem.upperbound}
Let $A_n$ be the stiffness matrix from the finite element
discretization {corresponding to the rescaled nodal basis $\varphi_j$
  of the space $S_n := S(\maT_n, m)$ in Equation
  \eqref{eq.def.stiffn}.} Then,
\begin{eqnarray*}
  \lambda_{max}(A_n)\leq M,
\end{eqnarray*}
where the constant $M$ is independent of the mesh level $n$. 
\end{lemma}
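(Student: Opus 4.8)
The plan is to bound $\lambda_{max}(A_n)$ by estimating the Rayleigh quotient $\frac{c^{T}A_n c}{c^{T}c}$ from above, uniformly in $n$, where $c=(c_j)$ ranges over coefficient vectors and $\mu_n := \sum_j c_j \varphi_j \in S_n$. Since $A_n = \big(a(\varphi_i,\varphi_j)\big)$ and $a(y,w) = ((L+\Hk)y,w) = (\nabla y,\nabla w)_{L^2} + ((V+L)y,w)_{L^2}$, we have $c^{T}A_n c = a(\mu_n,\mu_n)$. So the task reduces to proving two uniform inequalities: $a(\mu_n,\mu_n) \le C\|\mu_n\|_{\maK^1_1(\TmS)}^2$ (continuity of the form, which requires care because of the inverse-square potential $V$), and $\|\mu_n\|_{\maK^1_1(\TmS)}^2 \le C\sum_j c_j^2$ (an inverse-type estimate tying the weighted norm of a finite element function to its rescaled coefficients, for which Lemma \ref{thm.inverse} was prepared).

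First I would localize. Using the decomposition $\PP = \Omega \cup_{p}\big(\cup_{j=1}^n \maR_{pj}\cup \maV_{pn}\big)$ from \eqref{eq.decomp}, it suffices to get the bound tetrahedron-by-tetrahedron, since each basis function $\varphi_j$ is supported on the patch of tetrahedra meeting its node, and only boundedly many patches overlap any given tetrahedron. On a tetrahedron $T_i$ with $\diam(T_i) = h_i$, I would estimate $a|_{T_i}(\mu_n,\mu_n) = \|\nabla\mu_n\|_{L^2(T_i)}^2 + ((V+L)\mu_n,\mu_n)_{L^2(T_i)}$. For the gradient term, $\|\nabla\mu_n\|_{L^2(T_i)}^2 \le \|\mu_n\|_{H^1(T_i)}^2 \le C h_i\|\mu_n\|_{L^\infty(T_i)}^2$ by \eqref{eqn.inverse}, and then $h_i\|\mu_n\|_{L^\infty(T_i)}^2 \le C\sum_{j\in node(T_i)} c_j^2$ by \eqref{eqn.norme}. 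The potential term splits: away from $\maS$ (i.e.\ on $\Omega$ and on those $\maR_{pj}$ not touching a singular point... actually all $\maR_{pj}$ avoid $\maS$), $V+L$ is bounded, so $((V+L)\mu_n,\mu_n)_{L^2(T_i)} \le C\|\mu_n\|_{L^2(T_i)}^2 \le C h_i^3\|\mu_n\|_{L^\infty(T_i)}^2 \le C h_i^2 \sum_j c_j^2$, which is even better. On $\maV_{pn}$, the singular tetrahedron touching $p$ carries no basis functions (the modified interpolant / nodal convention sets those coefficients irrelevant, and $\varphi_j$ for nodes in $\maV_{pn}$ vanish near $p$ by construction since $\chi$ vanishes there), so $\rho$ is bounded below on the support of each $\mu_n$ restricted to $\maV_{pn}$, hence $|V| \le Z_{\max}/\rho^2 \le C$ there as well; one uses $\rho \gtrsim k^n h_i$ on the relevant part together with the scaling Lemma \ref{claim2} if a sharper bound is wanted, but boundedness suffices.

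The main obstacle I anticipate is handling the inverse-square singularity of $V$ cleanly: one must verify that on every tetrahedron actually carrying nonzero coefficients, $\rho$ is comparable to (or bounded below by a fixed multiple of) the tetrahedron's distance to the nearest singular point, so that $\int_{T_i} |V||\mu_n|^2 \lesssim \int_{T_i}\rho^{-2}|\mu_n|^2$ is controlled. The natural bookkeeping is to pass to the reference configuration via the dilation $k^{-(j-1)}$ centered at $p$ (Lemma \ref{claim2}, Lemma \ref{claim1}): on $\maR_{pj}$, after rescaling, $\rho \in (k, 1)$ is bounded away from $0$, the weighted norm $\maK^1_1$ is equivalent to $H^1$, the form $a$ has $n$-independent continuity constant, and the coefficient sum is scale-invariant up to the $\diam^{1/2}$ factors already accounted for in \eqref{eqn.eqn}; summing the resulting $n$-independent local bounds over the $\maO(2^{3n})$ tetrahedra is harmless because the Rayleigh quotient is a supremum, not a sum — for a fixed $c$, $a(\mu_n,\mu_n) = \sum_i a|_{T_i}(\mu_n,\mu_n) \le C\sum_i \sum_{j\in node(T_i)} c_j^2 \le C'\sum_j c_j^2$, the last step using that each node lies in boundedly many tetrahedra. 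Dividing by $c^{T}c = \sum_j c_j^2$ gives $\lambda_{max}(A_n) \le M$ with $M$ independent of $n$.
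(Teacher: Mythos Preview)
Your overall strategy (bound the Rayleigh quotient, localize, invoke Lemma~\ref{thm.inverse}) matches the paper's, and your treatment of the gradient part is fine. The genuine gap is in the potential term on the tetrahedra of $\maV_{pn}$. You assert that ``the singular tetrahedron touching $p$ carries no basis functions'' and that the $\varphi_j$ ``vanish near $p$ by construction since $\chi$ vanishes there,'' and from this conclude $|V|\le C$ on the support of $\mu_n$. This is not correct: the space $S_n=S(\maT_n,m)$ is the \emph{full} Lagrange space, and its nodal basis includes the hat function $\phi_p$ at the singular vertex $p$ (with $\phi_p(p)=1$), as well as all the other basis functions supported on the tetrahedra meeting $p$. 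The modified interpolant in \eqref{eqn.new7} is a particular element of $S_n$ used for approximation estimates; it does not alter the space or its basis. Likewise the cutoff $\chi$ appears only inside the proof of Theorem~\ref{athm1.gen} and has nothing to do with the definition of $\varphi_j$. Consequently $\rho$ is \emph{not} bounded below on $\supp(\mu_n)$ and $V$ is unbounded there, so your pointwise bound on $V$ fails.

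The paper sidesteps this by never localizing the form $a$ itself. It uses the global continuity estimate $a(v,v)\le C\|v\|_{\maK^1_1(\PP)}^2$ (Lemma~3.4 of \cite{HLNU1}) together with the Hardy-type bound $\|v\|_{\maK^1_1(\PP)}\le C\|v\|_{H^1(\PP)}$, thereby absorbing the singular potential into the weighted norm and then passing to the unweighted $H^1$ norm, which \emph{can} be localized tetrahedron by tetrahedron and estimated via \eqref{eqn.inverse} and \eqref{eqn.norme} exactly as you do. If you want to rescue your local approach instead, the correct computation on a tetrahedron $T_i$ with $p\in\overline{T_i}$ is not ``$V$ bounded'' but rather $\int_{T_i}\rho^{-2}|\mu_n|^2\le \|\mu_n\|_{L^\infty(T_i)}^2\int_{T_i}\rho^{-2}\,dx\le C\,\diam(T_i)\,\|\mu_n\|_{L^\infty(T_i)}^2$, using that $\rho^{-2}$ is integrable in three dimensions; this then feeds into \eqref{eqn.norme} just like the gradient term.
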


\begin{proof}
Let us fix the mesh level $n$. All the constants below will be independent of $n$. Let $\{T_i\}$ be the tetrahedra forming our mesh $\maT_n$. Let
$v \in {S_n}$ be arbitrary and write $v = \sum_jc_{j}\varphi_j$
and $\mathbf V:=(c_{j})$. Then, by Lemma 3.4 in \cite{HLNU1} we have
\begin{eqnarray*}
  \mathbf V^T A_n \mathbf V  =  a(v, v) \leq
  C\|v\|^2_{\maK_1^1(\PP)} \leq C \|v\|^2_{H^1(\PP)} \leq C
  \sum_i\|v\|^2_{H^1(T_i)}.
\end{eqnarray*}
By the inverse inequality \eqref{eqn.inverse} and the estimate
\eqref{eqn.norme}, we further have
\begin{eqnarray*}
   \mathbf V^T A_n \mathbf V \leq C \sum_i
   \diam(T_i) \|v\|^2_{L^\infty(T_i)} { \leq } C \sum_jc_{j}^2 \leq
   C \mathbf V^T \mathbf V,
\end{eqnarray*}
where $\diam(T_i)$ is the diameter of the tetrahedron $T_i$.  This
completes the proof.
\end{proof}

%We now turn to estimate the smallest eigenvalue of $A_n$.

\begin{lemma}\label{lem.n3} 
We use the same notation as the one for Lemma \ref{lem.upperbound}.
The smallest eigenvalue of the stiffness matrix $A_n$,
\begin{equation*}
  \lambda_{min}(A_n) \geq C \dim(S_n)^{-2/3}.
\end{equation*}

\begin{proof} 
For any $v\in {S_n}$, we use the notation $v =
\sum_{j} c_{j} \varphi_j$, $\mathbf V := (c_{j})$, and $\diam(T_i)$
denotes the diameter of $T_i$, as before. In view of
\eqref{eqn.norme}, the inverse estimate \eqref{eqn.inverse},
H\"older's inequality, and the Sobolev embedding estimate
\eqref{eqn.embedding}, we then have
\begin{eqnarray*}
  \mathbf V^T \mathbf V & = & \sum_jc_{j}^2 \leq C\sum_i \diam(T_i)
  \|v\|^2_{L^\infty(T_i)} \leq C \sum_i\|v\|^2_{L^{6}(T_i)}\\
  & \leq & C \Big (\sum_i 1 \Big )^{\frac{2}{3}}\, \Big( \sum_i
  \|v\|^{6}_{L^{6}(T_i)} \Big )^{\frac{1}{3}} \leq C
  \dim(S_n)^{\frac{2}{3}} \|v\|^2_{L^6(\PP)} \\
  & \leq & C \dim(S_n)^{\frac{2}{3}}\, \|v\|_{H^1(\PP)}^2\leq
  C\dim(S_n)^{\frac{2}{3}}\, \mathbf V^TA \mathbf V.
\end{eqnarray*}
\end{proof}
\end{lemma}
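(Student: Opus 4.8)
The plan is to estimate the Rayleigh quotient of $A_n$ from below, uniformly in the mesh level $n$. Since $A_n$ is symmetric, $\lambda_{min}(A_n) = \inf_{\mathbf V \neq 0} \mathbf V^T A_n \mathbf V / \mathbf V^T \mathbf V$, and since $\mathbf V^T A_n \mathbf V = a(v,v)$ for $v = \sum_j c_j \varphi_j \in S_n$ with $\mathbf V = (c_j)$, it suffices to prove a bound of the form $\mathbf V^T \mathbf V = \sum_j c_j^2 \leq C\dim(S_n)^{2/3}\, a(v,v)$, with $C$ independent of $n$ and $v$. So the whole argument is a chain of inequalities transferring the discrete $\ell^2$ norm of the coefficients to the energy norm, picking up exactly one factor $\dim(S_n)^{2/3}$ along the way.

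First I would use the local norm equivalence \eqref{eqn.norme} to write $\sum_j c_j^2 \leq C\sum_{T_i \in \maT_n} \diam(T_i)\,\|v\|_{L^\infty(T_i)}^2$ (summing over tetrahedra; each coefficient appears in only boundedly many tetrahedra, so this is harmless). Next, the inverse estimate \eqref{eqn.inverse} gives $\diam(T_i)^{1/2}\|v\|_{L^\infty(T_i)} \leq C\|v\|_{L^6(T_i)}$, hence $\sum_j c_j^2 \leq C\sum_i \|v\|_{L^6(T_i)}^2$. Then I would apply Hölder's inequality over the $\maO(\dim S_n)$ tetrahedra with exponents $3/2$ and $3$: $\sum_i \|v\|_{L^6(T_i)}^2 \leq \big(\sum_i 1\big)^{2/3}\big(\sum_i \|v\|_{L^6(T_i)}^6\big)^{1/3} = \maO(\dim S_n)^{2/3}\,\|v\|_{L^6(\PP)}^2$, using additivity $\sum_i \|v\|_{L^6(T_i)}^6 = \|v\|_{L^6(\PP)}^6$. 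This is the step that produces the exponent $2/3$. Finally, the Sobolev embedding \eqref{eqn.embedding}, the trivial bound $\|v\|_{H^1(\PP)} \leq C\|v\|_{\maK^1_1(\PP)}$ (valid because $\rho$ is bounded on $\PP$), and the coercivity $\|v\|_{\maK^1_1(\PP)}^2 \leq C\,a(v,v)$ — which follows from the lower bound in Lemma 3.4 of \cite{HLNU1} once $L$ is chosen large enough — close the chain: $\sum_j c_j^2 \leq C\dim(S_n)^{2/3}\,a(v,v)$, and therefore $\lambda_{min}(A_n) \geq C^{-1}\dim(S_n)^{-2/3}$.

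The part requiring care is that every constant in the chain — in \eqref{eqn.norme}, \eqref{eqn.inverse}, \eqref{eqn.embedding}, and the coercivity estimate — must be independent of $n$. For the local estimates this is exactly where the mesh construction is used: the $k$-refinement produces tetrahedra lying in only finitely many similarity classes (at most $22$, by the Remark after Algorithm \ref{def.n2}), so the reference-element inverse and norm-equivalence estimates of \cite{BrennerScott, Ciarlet78} transfer with a single constant after affine scaling. The only genuine bookkeeping point is to be sure the Hölder step counts tetrahedra rather than nodes, so that the factor is $(\#\maT_n)^{2/3} = \maO(\dim S_n)^{2/3}$ and not something worse; this, together with $\#\maT_n = \maO(2^{3n})$ from Remark \ref{rk.45}, gives the stated power of $\dim(S_n)$.
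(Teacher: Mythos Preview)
Your proposal is correct and follows essentially the same chain of inequalities as the paper's proof: the local norm equivalence \eqref{eqn.norme}, the inverse estimate \eqref{eqn.inverse}, H\"older over the tetrahedra to produce the $\dim(S_n)^{2/3}$ factor, and the Sobolev embedding \eqref{eqn.embedding}, followed by coercivity of $a(\cdot,\cdot)$. The only difference is that you spell out the final step $\|v\|_{H^1(\PP)}^2 \le C\,a(v,v)$ via $\|v\|_{H^1}\le C\|v\|_{\maK^1_1}$ and the lower bound from Lemma~3.4 of \cite{HLNU1}, whereas the paper writes this step in one line; your added remarks on why the constants are $n$-independent (finitely many similarity classes, tetrahedra count $\maO(\dim S_n)$) are correct and make the argument more self-contained.
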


Then, we have the estimate on the condition number.
\begin{theorem}\label{thm.main11}Let 
$A=(a(\varphi_i, \varphi_j))$ be the stiffness matrix. Then the
  condition number $\kappa(A)$ satisfies
\begin{eqnarray*}
  \kappa(A)\leq C\dim(S_n)^{2/3}.
\end{eqnarray*}
The constant $C$ depends on the finite element space, but not on
$\dim(S_n)$. 
\end{theorem}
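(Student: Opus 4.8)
The plan is to combine the two one-sided bounds already established. Recall that the condition number of a symmetric positive definite matrix is $\kappa(A) = \lambda_{max}(A)/\lambda_{min}(A)$, so it suffices to have an upper bound on $\lambda_{max}(A)$ and a lower bound on $\lambda_{min}(A)$. Both are in hand: Lemma \ref{lem.upperbound} gives $\lambda_{max}(A_n) \le M$ with $M$ independent of the mesh level $n$, and Lemma \ref{lem.n3} gives $\lambda_{min}(A_n) \ge C \dim(S_n)^{-2/3}$ for a constant $C$ independent of $n$.

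First I would note that $A_n$ is symmetric (it is the Gram-type matrix $(a(\varphi_i,\varphi_j))$ of the symmetric bilinear form $a(\cdot,\cdot)$) and positive definite, since for $L > C_0$ the form $a(y,w) = ((L+\Hk)y,w)$ is coercive on $\maK^1_1(\TmS) \supset S_n$ by Theorem \ref{theorem1} (equivalently by the Lax--Milgram setup used for Theorem \ref{athm1.fem}); hence all eigenvalues of $A_n$ are real and strictly positive, and $\kappa(A_n)$ is well defined as the ratio of the extreme eigenvalues. Then dividing the two bounds gives directly
\begin{equation*}
  \kappa(A_n) \;=\; \frac{\lambda_{max}(A_n)}{\lambda_{min}(A_n)}
  \;\le\; \frac{M}{C \dim(S_n)^{-2/3}}
  \;=\; \frac{M}{C}\,\dim(S_n)^{2/3},
\end{equation*}
which is the claimed estimate with constant $C' = M/C$ depending only on the finite element data (the similarity classes of tetrahedra, the polynomial degree $m$, the constant $L$, and the operator $\Hk$) and not on $\dim(S_n)$.

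There is essentially no obstacle here: the theorem is a bookkeeping corollary of Lemmas \ref{lem.upperbound} and \ref{lem.n3}, and the only point that deserves a sentence is the positivity/symmetry of $A_n$ justifying that $\kappa(A_n)$ equals the ratio of extreme eigenvalues rather than requiring the full singular-value definition. If one wanted to be careful about the implicit claim that $\dim(S_n)^{2/3}$ is the sharp rate, one could remark that the matching lower bound on $\kappa(A_n)$ follows from testing $\mathbf V^T A_n \mathbf V \le C \mathbf V^T \mathbf V$ (the upper-eigenvalue computation) against a well-chosen nodal vector supported on the most refined patch near a point of $\maS$, but this is not needed for the stated one-sided conclusion. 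I would therefore present the proof as the short division above, preceded by the remark on symmetry and positive definiteness.
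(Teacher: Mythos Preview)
Your proof is correct and matches the paper's own argument: the paper's proof is the single sentence ``Using $\kappa(A) = \lambda_{max}(A)/\lambda_{min}(A)$, we obtain the estimate by Lemmas \ref{lem.upperbound} and \ref{lem.n3},'' which is exactly the division you carry out. Your additional remarks on the symmetry and positive definiteness of $A_n$ are a welcome clarification but do not change the approach.
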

\begin{proof}
Using $k(A) = \lambda_{max}(A)/\lambda_{min}(A)$, we obtain the
estimate by Lemmas \ref{lem.upperbound} and \ref{lem.n3}.
\end{proof}

\section{Numerical tests of the finite element method\label{sec.tests}}

We now present the numerical tests for the finite element solution
defined in \eqref{eqn.new8} approximating possibly singular solutions
to Equation \ref{eq.PDE.eq}.

To be more precise, suppose that our periodicity lattice is $2\ZZ^3$
and we choose our fundamental domain $\PP= [-1, 1]^3$ to be a cube of
side length 2.  We impose periodic boundary condition on the following
model problem
\begin{eqnarray}\label{eqn.axis1}
 (L + \Hk) v := (-\Delta +\delta\psi r^{-2} + L)v=1\quad
  {\rm{in}} \quad \Omega,
\end{eqnarray}
where $r=|x|$, $\delta>-1/4$, $L\geq 0$, and the cut-off
function $\psi:=e^{r_c^2/(r^4-r_c^2)+1}$ for $r^2\leq r_c$ and
$\psi=0$ for $r^2> r_c$; in the tests, we chose $r_c=0.25$.  Note that
if $\delta>0$, it is clear that the operator $L + \Hk$ is
positive on $\maK^1_1$ (see Theorem \ref{theorem1}). We use the $C^0$
linear finite element method on triangulations graded toward the
origin with grading ratio $k>0$ (Recall that $k=0.5$ corresponds to
the quasi-uniform refinement.)

To enforce the periodic boundary condition for the finite element
functions, we use meshes where all the boundary nodal points are
symmetric about the mid-plane between opposite faces of the cube. Any
set of the symmetric nodes will be associated to the same shape
function in the discretization. For example, nodes on edges of the
cube generally have three mirror images over two mid-planes (two
direct mirror images and the third is symmetric over the line of
intersection of these two mid-planes), and these four points are
associated to the same shape function. Consequently, the eight
vertices of the cube are associated to the same shape function through
symmetry. See Figure \ref{cube} for example.

\begin{figure}
\centerline{\includegraphics[scale=0.20]{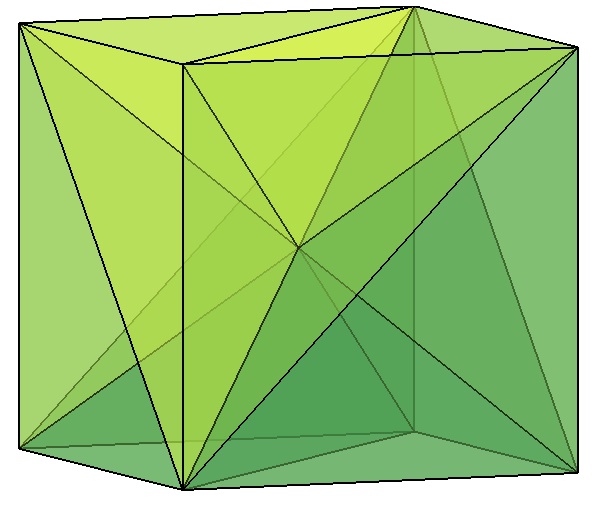} 
\hspace{3cm}\includegraphics[scale=0.20]{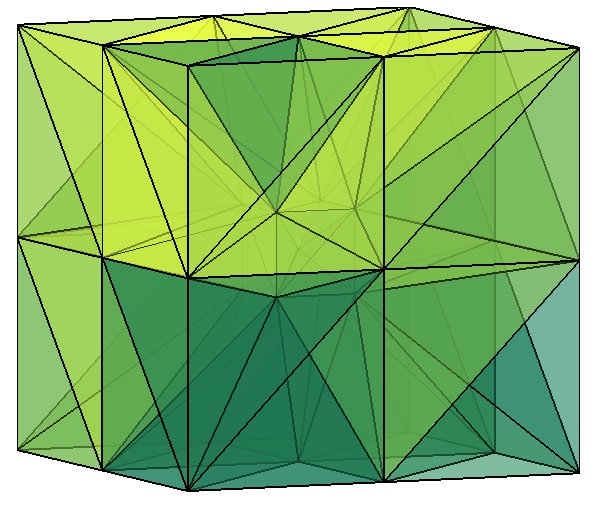}}
\caption{The initial mesh on the unit cube (left);
   the mesh after one $k$ refinement for the origin, $k=0.2$ (right).}
\label{cube}
\end{figure}

Our first tests are for Equation \eqref{eqn.axis1} with $\delta=4.0$
and $L=0$.  According to Theorem \ref{athm1.fem}, the optimal
rate of convergence for the Finite Element solution should be obtained
on triangulations with any $k\leq 0.5$, since
$\eta=\sqrt{1/4+4}>1$. The convergence rates $e$ associated to
triangulations with different values of $k$ are listed in Table
\ref{tab.5.3}. Starting from an initial triangulation, we compute the
rates based on the comparison of the numerical errors on
triangulations with consecutive $k$-refinements,
\begin{eqnarray}\label{e}
  e:=\log_2\frac{|v_{j-1}-v_{j}|_{\maK^1_1}}{|v_{j}-v_{j+1}|_{\maK_1^1}},
\end{eqnarray}
where $v_j$ is the finite element solution on the mesh after $j$
$k$-refinements. {Recall the dimension of the finite element space grows by a factor of 8 with one $k$-refinement. By Theorem \ref{athm1.fem}, for a sequence of  optimal meshes, the error $|v-v_j|_{\maK^1_1}$ is reduced by a factor of 2 for linear finite element approximations with each $k$-refinement.}Thus, $e\rightarrow 1$ implies that the optimal
rate of convergence in Theorem \ref{athm1.fem} is achieved.

Table \ref{tab.5.3} clearly shows that the convergence rates $e$
approach $1$ for all values of the grading parameter $k$. This is
in agreement with our theory that the optimal rates of convergence are
obtained for any triangulations with $k\leq 0.5$, since the
singularity in the solution is not strong enough to be detectable for
linear finite elements.

\begin{table}
\begin{tabular}{|l|l|}       \hline
  \emph{ $j\backslash e$ } & $k=0.1$ \hspace{0.5cm}
  $k=0.2$ \hspace{0.5cm} $k=0.3$ \hspace{0.5cm}
  $k=0.4$ \hspace{0.5cm} $k=0.5$\\ \hline
  \ \ 2 & 0.42 \hspace{1.1cm} 0.44 \hspace{1.1cm} 0.56 \hspace{1.1cm}
  0.33 \hspace{1.1cm} -0.20 \\ \hline
  \ \ 3 & 0.48 \hspace{1.1cm} 0.68 \hspace{1.1cm} 0.75 \hspace{1.1cm}
  0.79 \hspace{1.1cm} 0.70 \\ \hline
  \ \ 4 & 0.78 \hspace{1.1cm} 0.81 \hspace{1.1cm} 0.86 \hspace{1.1cm}
  0.88 \hspace{1.1cm} 0.85 \\ \hline
  \ \ 5 & 0.91 \hspace{1.1cm} 0.92 \hspace{1.1cm} 0.94 \hspace{1.1cm}
  0.95 \hspace{1.1cm} 0.93 \\ \hline
  \ \ 6 & 0.97 \hspace{1.1cm} 0.97 \hspace{1.1cm} 0.98 \hspace{1.1cm}
  0.99 \hspace{1.1cm} 0.98 \\ \hline
\end{tabular}
\caption{Convergence rates $e$ of finite element solutions solving
  equation \eqref{eqn.axis1} with $\delta=4.0$ and $L=0$ on
  different graded tetrahedra.}\label{tab.5.3}
\end{table}

In the second test, we implemented our method solving equation
\eqref{eqn.axis1} with $\delta=0.6$, $L=0$ and summarize the
results in Table \ref{tab.5.4}. Based on the upper bound
$\eta=\sqrt{1/4+0.6}$ given in Theorem \ref{athm1.fem}, we expect the
optimal rate of convergence for the numerical solution as long as the
grading parameter $k < 2^{-1/\eta}\approx0.47$. The convergence
rates in Table \ref{tab.5.4} tend to $1$ when $k\leq 0.4$, which
implies the optimality of our finite element approximation on these
meshes. However, when $k=0.5$, the convergence rate is far less
than $1$ and there is a large gap between the rates corresponding to
$k=0.4$ and $k=0.5$. This further confirms our theory that
the upper bound of the suitable range of $k$ for an optimal
finite element approximation lies in $(0.4, 0.5)$.

\begin{table}
\begin{tabular}{|l|l|}       \hline
  \emph{ $j\backslash e$ } & $k=0.1$ \hspace{0.5cm}
  $k=0.2$ \hspace{0.5cm} $k=0.3$ \hspace{0.5cm}
  $k=0.4$ \hspace{0.5cm} $k=0.5$\\ \hline
  \ \ 2 & 0.20 \hspace{1.1cm} 0.30 \hspace{1.1cm} 0.33 \hspace{1.1cm}
  0.11 \hspace{1.1cm} -0.03 \\ \hline
  \ \ 3 & 0.54 \hspace{1.1cm} 0.66 \hspace{1.1cm} 0.69 \hspace{1.1cm}
  0.61 \hspace{1.1cm} 0.39 \\ \hline
  \ \ 4 & 0.74 \hspace{1.1cm} 0.81 \hspace{1.1cm} 0.83 \hspace{1.1cm}
  0.77 \hspace{1.1cm} 0.60 \\ \hline
  \ \ 5 & 0.88 \hspace{1.1cm} 0.91 \hspace{1.1cm} 0.92 \hspace{1.1cm}
  0.87 \hspace{1.1cm} 0.72 \\ \hline
  \ \ 6 & 0.95 \hspace{1.1cm} 0.97 \hspace{1.1cm} 0.98 \hspace{1.1cm}
  0.92 \hspace{1.1cm} 0.79 \\ \hline
\end{tabular}
\caption{Convergence rates $e$ of finite element solutions solving
  equation \eqref{eqn.axis1} with $\delta=0.6$ and $L=0$ on different
  graded tetrahedra.}\label{tab.5.4}
\end{table}

The third tests are for negative potentials in equation
\eqref{eqn.axis1}, where we set $\delta=-0.1$ and $L=20$ to satisfy
the positivity requirement in Theorem \ref{theorem1}.  Our theoretical
results indicate that the singularity in the solution due to the
singular potential is stronger in this case and the optimal rate can
be achieved only if the grading parameter
$k<2^{-1/\sqrt{1/4-0.1}}\approx 0.167$. Because of the limitation of
the computation power, we only display the convergence results up to
the 7th refinement for various graded parameters $k$ in Table
\ref{tab.5.5}. We, however, still see the trend that appropriate
gradings improve the convergence rate as predicted in Theorem
\ref{athm1.fem}. When $k$ is close to the optimal value $0.167$ (i.e.,
$k=0.1$ and $0.2$), we have remarkable improvements.  In particular,
for $k=0.1$, based on Table \ref{tab.5.5}, we expect that the optimal
rate occurs with further refinements.

  We have also implemented the method on graded meshes
for the eigenvalue problem associated with equation \eqref{eqn.axis1},
especially on the computation of the first eigenvalues. Namely,
\begin{equation*}
 {\Hk} u:=(-\Delta +\delta\psi r^{-2})u=\lambda_1 u
\end{equation*}
on the unit cube, where $\lambda_1$ is the first eigenvalue of the
operator. Depending on the choice of $\delta$, the convergence rates
for the numerical eigenvalues on graded meshes are roughly twice the
rates for the numerical solutions of equation \eqref{eqn.axis1} (see
Tables \ref{tab.5.3}, \ref{tab.5.4}, and \ref{tab.5.5}), and present
similar trends for different gradings.

\begin{table}
\begin{tabular}{|l|l|}       \hline
  \emph{ $j\backslash e$ } & $k=0.1$ \hspace{0.5cm}
  $k=0.2$ \hspace{0.5cm} $k=0.3$ \hspace{0.5cm}
  $k=0.4$ \hspace{0.5cm} $k=0.5$\\ \hline
  \ \ 2 & -0.10 \hspace{.95cm} -0.05 \hspace{.95cm} -0.09 \hspace{.95cm}
  -0.16 \hspace{.95cm} -0.03 \\ \hline
  \ \ 3 & 0.32 \hspace{1.1cm} 0.37 \hspace{1.1cm} 0.30 \hspace{1.1cm}
  0.19 \hspace{1.1cm} 0.07 \\ \hline
  \ \ 4 & 0.51 \hspace{1.1cm} 0.52 \hspace{1.1cm} 0.44 \hspace{1.1cm}
  0.32 \hspace{1.1cm} 0.18 \\ \hline
  \ \ 5 & 0.67 \hspace{1.1cm} 0.64 \hspace{1.1cm} 0.53 \hspace{1.1cm}
  0.40 \hspace{1.1cm} 0.26 \\ \hline
  \ \ 6 & 0.80 \hspace{1.1cm} 0.72 \hspace{1.1cm} 0.59 \hspace{1.1cm}
  0.45 \hspace{1.1cm} 0.32 \\ \hline
\end{tabular}
\caption{Convergence rates $e$ of finite element solutions solving
  equation \eqref{eqn.axis1} with $\delta=-0.1$ and {$L=20$} on
  different graded tetrahedra.}\label{tab.5.5}
\end{table}

All our numerical tests (Tables \ref{tab.5.3},\ref{tab.5.4}
\ref{tab.5.5}, and corresponding eigenvalue computations) convincingly
verify Theorem \ref{athm1.eig} by comparing the rates of convergence
for different singular potentials on different graded triangulations
for the model operator in \eqref{eqn.axis1}. The theoretical upper
bounds $2^{-1/\eta}$ of the optimal range for the grading parameter
$k$ are also clearly demonstrated in these numerical results. In these
tests, the initial triangulation of the unit cube consists of 12
tetrahedra and we consecutively refine the mesh using the
$k$-refinements up to level 7 that includes $12\times 8^7\approx
2.5\times 10^7$ tetrahedra and roughly 4.2 million unknowns.
Numerical experiments show that the condition numbers of our discrete
systems grow by a factor of 4 for consecutive refinements, regardless
of the value of $k$, which resembles the estimates given in
\cite{BS89} for the Laplace operator. However, the values of $k$
affect the magnitude of the condition numbers. In general, smaller $k$
leads to bad shapes for the tetrahedra and therefore results in larger
condition numbers. The preconditioned conjugate gradient (PCG) method
was used as the numerical solver for the discrete systems.

% \bibliography{schr_part2}
% \bibliography{num}
% \bibliographystyle{plain}

\def\cprime{$'$} \def\ocirc#1{\ifmmode\setbox0=\hbox{$#1$}\dimen0=\ht0
  \advance\dimen0 by1pt\rlap{\hbox to\wd0{\hss\raise\dimen0
  \hbox{\hskip.2em$\scriptscriptstyle\circ$}\hss}}#1\else {\accent"17 #1}\fi}

\end{document}